\def\boxit#1{\vbox{\hrule\hbox{\vrule\kern3pt
        \vbox{\kern3pt#1\kern3pt}\kern3pt\vrule}\hrule}}
\def\dag{{\cal y}}
\def\reals{ { {\rm  I \kern-0.15em R }  } }
\def\complex{ {\,{{\rm C} \kern-0.50em \raise0.20ex {  |}}\, }}
\def\Rbf{{\bf R}}
\def\be{\begin{equation}}
\def\ee{\end{equation}}
\def\scalefig#1{\epsfxsize #1\textwidth}
\def\Rxx{\Rbf_{\ssstyle X\kern-.1em X}}
\let\ssstyle=\scriptscriptstyle
\def\etal{{\it et al. \/}}
\def\ie{{\it i.e.,\ \/}}
\def\Kout{\setbox1=\hbox{\Huge\bf K}\hbox to
1.05\wd1{\hspace{.05\wd1}
\def\Sout{\setbox1=\hbox{\Huge\bf S}\hbox to 1.05\wd1{\hspace{.05\wd1}

\makeatletter
\def\ps@headings{%
\def\@oddhead{\mbox{}\scriptsize\rightmark \hfil \thepage}%
\def\@evenhead{\scriptsize\thepage \hfil \leftmark\mbox{}}%
\def\@oddfoot{}%
\def\@evenfoot{}}
\makeatother \pagestyle{headings}

\def\ie{{\it i.e.,\ \/}}

\newtheorem{property}{Property}
\newtheorem{theorem}{Theorem}
\newtheorem{lemma}{Lemma}
\def\scalefig#1{\epsfxsize #1\textwidth}

\newtheorem{definition}{Definition}
\def\te{\tilde{e}}
\def\tE{\tilde{E}}

\newcounter{algleo}
\newlength{\lefttab}
\newenvironment{algleo}%
  {\trivlist
   \topsep=0pt\itemsep=0pt
   \def\li{\item\refstepcounter{algleo}\makebox[0pt][r]{\thealgleo\hspace{\lefttab}\hspace{-0.35cm}}
   \hangafter1\hangindent1em\noindent}%
   \def\linonumber{\item\makebox[0pt][r]{\hspace{\lefttab}}
   \hangafter1\hangindent1em\noindent}%
   \addtolength{\lefttab}{1.25em}
   \leftskip=\lefttab}%
  {\endtrivlist}
\def\If{{\bf if }}
\def\Then{{\bf then }}
\def\Else{{\bf else }}
\def\While{{\bf while }}

\def\Foreach{{\bf for each }}

\def\Do{{\bf do }}
\def\End{{\bf end}}

\begin{document}

\title{Dynamic Shortest Path Algorithms for Hypergraphs}

\author{\IEEEauthorblockN{J. Gao$^\dag$, Q. Zhao$^\dag$, W. Ren$^\ddag$, A. Swami$^\S$, R.Ramanathan$^\P$, A. Bar-Noy$^\sharp$ }
\IEEEauthorblockA{$^\dag$UC Davis, $^\ddag$Microsoft, $^\S$Army Research Lab, $^\P$Raytheon BBN Technologies,
$^\sharp$City University of New York}}


\maketitle

\begin{abstract}
A hypergraph is a set $V$ of vertices and a set of non-empty subsets of $V$, called hyperedges.
Unlike graphs, hypergraphs can capture higher-order interactions in social and communication networks that go beyond
a simple union of pairwise relationships. In this paper, we consider the shortest path problem in hypergraphs.
We develop two algorithms for finding and maintaining the shortest hyperpaths in a dynamic network with both weight and
topological changes. These two algorithms are the first addressing the fully dynamic shortest path problem in a general
hypergraph. They complement
each other by partitioning the application space based on the nature of the change dynamics and the type of the hypergraph.
We analyze the time complexity of the proposed algorithms and perform simulation experiments for both random geometric hypergraphs and the
Enron email data set. The latter illustrates the application of the proposed algorithms in social networks for identifying
the most important actor based on the closeness centrality metric.
\end{abstract}

\section{Introduction}
A\footnotetext{This work was supported by the Army Research Laboratory NS-CTA under Grant W911NF-09-2-0053.}
graph is a basic mathematical abstraction for modeling networks,
in which nodes are represented by vertices and pairwise relationships
are represented by edges between vertices. A graph is thus given by
a vertex set $V$ and an edge set $E$ consisting of cardinality-2
subsets of $V$. A hypergraph is a natural extension of a graph
obtained by removing the constraint on the cardinality of an edge:
any non-empty subset of $V$ can be an element (a hyperedge) of the edge set
$E$ (see Fig~\ref{fig:HG_SC_example}). It thus captures group behaviors and higher-dimensional
relationships in complex networks that are more than a simple union
of pairwise relationships. Examples include communities and
collaboration teams in social networks, document clusters in
information networks, and cliques, neighborhoods, and multicast groups
in communication networks.

\begin{figure}[htbp]
\centering
\begin{psfrags}
\psfrag{1}[c]{$v_1$}
\psfrag{2}[c]{$v_2$}
\psfrag{3}[c]{$v_3$}
\psfrag{4}[c]{$v_4$}
\psfrag{5}[c]{$v_5$}
\psfrag{6}[c]{$v_6$}
\psfrag{7}[c]{$v_7$}
\psfrag{8}[c]{$v_8$}
\psfrag{9}[c]{$v_9$}
\scalefig{0.3}\epsfbox{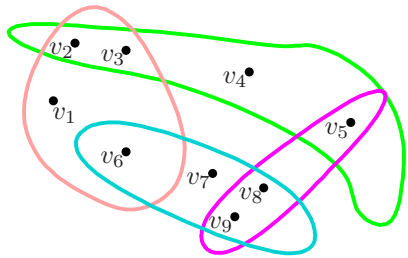}
\end{psfrags}
\caption{An example hypergraph with $4$ hyperedges:
$(v_1,v_2, v_3, v_6)$, $(v_2,v_3,v_4,v_5)$, $(v_6,v_7,v_8,v_9)$, and $(v_5,v_8,v_9)$. }
\label{fig:HG_SC_example}
\end{figure}

While the concept of hypergraph has been around
since 1920's (see, for example, \cite{berge1976graphs}), many well-solved algorithmic problems in graph theory remain
largely open under this more general model. In this paper, we
address the shortest path problem in hypergraphs.

\subsection{Shortest Path Problem in Graphs}
The shortest path problem is perhaps one of the most basic problems in
graph theory. It asks for the shortest path between two vertices or from a
source vertex to all the other vertices (\ie the single-source
version or the shortest path tree). Depending on whether the edge weights can be negative,
the problem can be solved via Dijkstra's
algorithm or Bellman-Ford algorithm~\cite{Kleinberg&Tardos:Algo_Design}.
This basic problem finds diverse applications in communication networks,
operational research, plant and facility layout, and VLSI
design~\cite{Chen:96CSUR}.

The dynamic version of the shortest path problem is to maintain the shortest path tree
without recomputing from scratch during a sequence of changes to the graph.
A typical change to a graph includes weight increase, weight decrease, edge insertion, and edge deletion.
The last two types of changes model network topological changes, but they can be
conceptually considered as special cases of weight changes by allowing weight to be infinity. Thus,
if the sequence of changes contains only weight increase and edge deletion, we call it a decremental problem;
if it contains only weight decrease and edge insertion, we call it an incremental problem.
Otherwise, we have a fully dynamic problem. If multiple edges change simultaneously, then it is called a batch problem.

There have been a number of studies of the dynamic shortest path problem in graphs. Ramalingam and Reps~\cite{ramalingam1996computational},
Frigioni \etal~\cite{Frigioni1998,frigioni2000fully}, and Narvaez \etal~\cite{narvaez2000new} proposed several
algorithms for the single-change problem. The batch problem was considered
in~\cite{ramalingam1996incremental,narvaez2000new,bauer2009batch}. Comprehensive experiments on the comparison of
different batch algorithms can be found in~\cite{bauer2009batch,taoka2007performance}.

\subsection{Shortest Path Problem in Hypergraphs}

Both the static and dynamic shortest path problems have a corresponding version
in hypergraphs. The static shortest hyperpath problem was considered by
Knuth~\cite{Knuth:77IPL} and Gallo \etal~\cite{Gallo&Etal:93DAM}, in which Dijkstra's algorithm was
extended to obtain the shortest hyperpaths. Knuth's algorithm is for a special class of hypergraphs
while Gallo's algorithm is for a general hypergraph.
Ausiello \etal proposed a dynamic shortest hyperpath algorithm for directed hypergraphs,
considering only the incremental problem with the weights of all hyperedges limited to a finite
set of numbers~\cite{ausiello1990dynamic,ausiello1992optimal}.
A dynamic algorithm for the batch problem in a special class of hypergraphs was developed in \cite{ramalingam1996incremental}.

With the exception of the above few studies, the shortest hyperpath
problem remains largely unexplored. To the best of our knowledge, no algorithms exist
for the fully dynamic problem in a general hypergraph.

In this paper, we develop two fully dynamic shortest path algorithms for general hypergraphs.
These two algorithms complement each other, with each preferred
in different types of hypergraphs and dynamics.

Referred to as the HyperEdge based Dynamic Shortest Path algorithm (HE-DSP), the first algorithm
is an extension of the dynamic Dijkstra's algorithm for graphs to hypergraphs (parallel to Gallo's extension
of the static Dijkstra's algorithm to hypergraphs in~\cite{Gallo&Etal:93DAM}). The extension of the dynamic
Dijkstra's algorithm to hypergraphs is more involved than that of the static Dijkstra's algorithm. This is due to
the loss of the tree structure (in the original graph sense) in the collection of the shortest hyperpaths from a source to all other vertices.
Since the dynamic Dijkstra's algorithm relies on the tree structure to update the shortest paths after an incremental
change (weight increase or edge deletion), special care needs to be given when extending it to hypergraphs.

The second algorithm is rooted in the idea of Dimension Reduction and is referred to as DR-DSP.
The basic idea is to reduce the problem to
finding the shortest path in the underlying graph of the hypergraph. The underlying graph of a hypergraph
has the same vertex set and has an edge between two vertices if and only if
there is at least one hyperedge containing these two vertices in the original hypergraph. The weight of an edge
in the underlying graph is defined as the minimum weight among all hyperedges containing the two vertices of
this edge. The shortest hyperpath in the hypergraph can thus be obtained from the shortest path in the underlying
graph by substituting each edge along the shortest path with the hyperedge that lent its weight to this edge.
The correctness and advantage of this algorithm are readily seen: the definition of weight in the
underlying graph captures the minimum cost offered by all hyperedges in choosing a path between two vertices, thus
ensuring the correctness of the algorithm; the reduction of a hypergraph to its underlying graph removes many
hyperedges from consideration when finding the shortest path, leading to efficiency and agility to dynamic changes.

HE-DSP is more efficient in hypergraphs that are densely connected through high-dimensional hyperedges
and for network dynamics where changes often occur to hyperedges that are not on the current
shortest hyperpaths. DR-DSP has lower complexity when hyperedge changes often lead to changes in
the shortest hyperpaths. This is usually the case in networks where hyperedges in the shortest hyperpaths
are more prone to changes due to attacks, frequent use, or higher priority in maintenance and upgrade. Furthermore, DR-DSP leads to an alternative algorithm for solving the static shortest hyperpath problem when the dynamic problem degenerates to the static problem. It
has the same complexity as Gallo's algorithm for a general hypergraph and lower complexity
for simplicial complexes (a special class of hypergraphs whose hyperedge set is closed under the subset operation).
We also point out that both proposed algorithms apply to directed hypergraphs with minor modifications in their implementation details.

A detailed time complexity analysis of these two algorithms is provided to demonstrate their performance
in the worst-case change scenario. Using a random geometric
hypergraph model and a real data set of a social network (Enron email data set), we study the average performance of these two algorithms
in different scenarios and demonstrate the partition of the application space between these two algorithms.
In the experiment with Enron email data set, the proposed algorithms successfully identified the most important actor
in this social network using the closeness centrality metric.

\subsection{Applications}

Shortest path computations on hypergraphs can be applied to
communication as well as social networks. An example application in
wireless communications, in particular, for multihop wireless networks, is
in {\em opportunistic routing} schemes such as ExOR~\cite{biswas2005exor},
GeRaF~\cite{zorzi2003geographic}, and MORE\cite{Chachulski&etal:07}.
In such schemes, any receiver of a packet is eligible to forward
the packet. Receivers typically execute a protocol amongst themselves
to decide who should forward it. This naturally leads to a hypergraph model
where a node and its neighbors form a hyperedge. The cost of each hyperedge can be defined
based on
the cardinality of the hyperedge to capture the success rate of forwarding (lower the cardinality, lesser the chance that
at least one of the nodes successfully receives the packet) and the associated overhead (higher the cardinality, higher
the energy consumption and the overhead in choosing the forwarding node). A shortest hyperpath from the
source to the destination is thus a better route than merely the traditional
shortest path. And as the network topology changes, a dynamic algorithm is required to
maintain the shortest hyperpath.

In social networks, information (results, event reports,
opinions, rumors, \emph{etc.}) propagates through diverse communication means including direct links (e.g.,
gestures, optical, satcom, regular phone call), social media (e.g.,
Facebook, Twitter, blogs), mailing lists, and newsgroups. Such a
network may be modeled as a hypergraph with the weight of
a hyperedge reflecting the cost, credibility, and/or delay for disseminating
information among all vertice of this hyperedge. In particular, the weight of a hyperedge
can capture the unique effect on the information after it passes through a group of people.
For instance, a result can be discussed by overlapping blog collaboration
networks as it spreads, and often the discussion yields a better
result than if it only spreads through individuals. The minimum cost information passing in social networks
can thus be modelled as a shortest hyperpath problem.

Another potential application is that of finding the most important
actor in a social network. Under a graph model of social networks,
the relative importance of a vertex can be measured by its
betweenness and closeness centrality indices. The former is defined
based on the number of shortest paths that pass through this vertex,
and the latter, the total weight of the shortest paths from this
vertex to all the other vertices~\cite{Wasserman&Faust:Social_Net_Ana}.
In a social network exhibiting hyper-relationships, betweenness and
closeness centrality, based on the shortest hyperpaths, would
be better indicators of the relative importance of each actor.
In Sec.~\ref{sec:simu}, we apply the proposed shortest hyperpath algorithms to
the Enron email data set. We propose a weight function that leads to the successful
identification of the CEO of Enron as the most important actor under the
closeness centrality metric. The distance of each person in the data set to
the CEO along the resulting shortest hyperpaths closely reflects the position
of the person within the company.

\section{Background on Dynamic Shortest Path Algorithms for Graphs}
\label{sec:BG}

In this section, we present the basic ideas of the dynamic shortest path algorithms developed for graphs
in~\cite{frigioni2000fully}.
Some basic techniques in updating and maintaining the shortest path tree will be borrowed in later sections
when we develop dynamic shortest hyperpath algorithms.

\subsection{Dynamic Shortest Path Problem}
A change $\delta$ on a graph $G=(V,E)$ 
corresponds to one edge modification.
There are four types of changes: weight increase, weight decrease, edge insertion, and edge deletion.
Weight increase and edge deletion can be similarly treated (with small differences in the required data structures
which will be omitted for simplicity), so can weight decrease and edge insertion. The dynamic algorithms are
thus presented only for weight increase and weight decrease.

Given a graph $G$, a source node $s$, and a sequence of changes $C=\{\delta_1,\delta_2,\ldots,\delta_l\}$ on $G$,
the dynamic shortest path problem is to find the shortest paths from $s$ to all nodes in each new graph after change $\delta_i$.

In the following, $D[v]$ denotes the distance of a vertex $v$ to the source $s$, $P[v]$ the parent of $v$ in the shortest path tree.
A vertex $v$ is called an affected vertex if $D[v]$ or $P[v]$ or both change in the new shortest path tree.
An edge is called an affected edge if it contains an affected vertex.

\subsection{Weight Decrease}
Consider that the weight of edge $(\check{u},\check{v})$ decreases to $w_{new}$.
Without loss of generality, assume that $D[\check{u}]\leq D[\check{v}]$.
It is not difficult to see that $\check{u}$ will not be affected by this change.
The dynamic algorithm starts with
determining whether $\check{v}$ will be affected by simply checking the inequality
\begin{equation}
D[\check{u}]+w_{new}<D[\check{v}].
\label{eq:ineq}
\end{equation}
If the inequality does not hold, then this edge with the decreased weight does not provide a shorter path for $\check{v}$;
the algorithm ends and the shortest path tree remains unchanged.
If the inequality holds, then $\check{v}$ is affected; its new shortest path from $s$ must go through edge $(\check{u},\check{v})$
and $D[\check{v}]$ reduces to $D[\check{u}]+w_{new}$. We put $\check{v}$ in a priority queue\footnote{A priority queue is an abstract data type
with the following access protocol: only the highest-priority element can be accessed. Basic operations of a priority queue
include Enqueue (add a new item to the queue), Dequeue (remove the item with the highest priority and return this item),
Update (change the priority of one item in the queue), and Peek (obtain the value of the item with the highest priority).
Standard implementations of a priority queue with different time complexities include
array, link list, Binary heap, and Fibonacci heap~\cite{Dale:06}.} $Q$, and the rest of
the procedure is similar to Dijkstra's algorithm:
dequeue the node $z$ with the minimum distance from $Q$, update the distances of its neighbors, update $Q$ by
inserting the new affected vertices
among the neighbors to Q and update the ranks of others based on the updated distances. The procedure iterates until $Q$ is empty.
A pseudo code presentation of the basic steps is given below.

\setcounter{algleo}{0}
\begin{algleo}
\linonumber {\bf Graph: Weight Decrease$(\check{u},\check{v},w_{new})$}.
\linonumber {\bf Step0 (Update the graph)}
\begin{algleo}
\li $w(\check{u},\check{v})\leftarrow w_{new}$
\end{algleo}
\linonumber {\bf Step1 (Determine the affected vertex in $(\check{u},\check{v})$)}
\begin{algleo}
\li $x\leftarrow \textrm{argmin}_{q\in\{\check{u},\check{v}\}}\{D[q]\}$; $y\leftarrow \textrm{argmax}_{q\in\{\check{u},\check{v}\}}\{D[q]\}$
\li \If $D[x]+w_{new}<D[y]$ \Do
\begin{algleo}
\li $D[y]\leftarrow D[x]+w_{new}$; $P[y]\leftarrow x$
\li Enqueue$(Q, \langle y,D[y]\rangle)$
\end{algleo}
\li \End
\end{algleo}
\linonumber {\bf Step2 (Iteratively update all affected vertices)}
\begin{algleo}
\li \While NonEmpty $(Q)$ \Do
\begin{algleo}
\li $\langle z,D[z]\rangle \leftarrow$ Dequeue$(Q)$
\li \Foreach $v\in V$ s.t. $(z,v)\in E$
\begin{algleo}
\li \If $D[v]>D[z]+w(z,v)$ \Then
\begin{algleo}
\li $D[v]\leftarrow D(z)+w(z,v)$; $P[v]\leftarrow z$
\li Enqueue or Update$(Q,\langle v,D[v]\rangle)$
\end{algleo}
\li \End; \End; \End
\end{algleo}
\end{algleo}
\end{algleo}
\end{algleo}

\subsection{Weight Increase}
Consider that the weight of edge $(\check{u},\check{v})$ increases to $w_{new}$.
Again, assume that $D[\check{u}]\leq D[\check{v}]$.
If $(\check{u},\check{v})$ is not an edge in the shortest path tree, then none of the vertices will be affected, the shortest path tree remain unchanged.
Otherwise, the descendants, and only the descendants of this edge in the shortest path tree may be affected. For these vertices, some of them
will have increased distances, some of them will go through an alternative path with the same distance (but changed parent),
while the rest will not be affected. In order to classify the vertices into these three categories,
we introduce the coloring idea in Frigioni's algorithm \cite{frigioni2000fully}:
\begin{itemize}
\item[\textbf{(1)}]  $v$ is colored {\em white} if neither $D[v]$ nor $P[v]$ needs to be changed.
\item[\textbf{(2)}]  $v$ is colored {\em pink} if $P[v]$ needs to be changed but $D[v]$ remains the same.
\item[\textbf{(3)}]  $v$ is colored {\em red} if $D[v]$ increases.
\end{itemize}

It is not difficult to see that if a vertex $v$ is white or pink, all its descendants in the shortest path tree are white;
if $v$ is red, all its descendants are either red or pink. Therefore the coloring procedure is clear:
we first determine whether $\check{v}$ is pink or red by checking whether there is an alternative shortest path with the same distance for $\check{v}$
(note that $\check{v}$ cannot be white
due to the weight change of edge $(\check{u},\check{v})$ that is on its current shortest path);
if such a path exists, then we color $\check{v}$ pink and the algorithm ends, otherwise we color it red and put all its children in a priority queue $M$.
The procedure then iterates for each vertex in $M$ according to an increasing order of the vertex distances.

After the coloring process, we only need to deal with the red vertices. For each red vertex $z$, we initialize its distance
with the distance of the shortest path through one of its non-red neighbors and put $z$ in another priority queue $Q$
(if no non-red neighbor exists, we initialize it with $\infty$). After this, the procedure is similar to Step~2 in the Graph: Weight Decrease
algorithm: at each iteration, we extract the vertex at the top of $Q$ and update its neighbors and $Q$ until $Q$ is empty.

\setcounter{algleo}{0}
\begin{algleo}
\linonumber {\bf Graph: Weight Increase}$(\check{u},\check{v},w_{new})$.
\linonumber {\bf Step0 (Update the graph)}
\begin{algleo}
\li $w(\check{u},\check{v})\leftarrow w_{new}$
\end{algleo}
\linonumber {\bf Step1 (Determine the affected vertex in $(\check{u},\check{v})$)}
\begin{algleo}
\li $x\leftarrow \textrm{argmin}_{q\in \{\check{u},\check{v}\}}\{D[q]\}$
\li $y\leftarrow \textrm{argmax}_{q\in \{\check{u},\check{v}\}}\{D[q]\}$
\li \If $P[y]=x$ \Then
\begin{algleo}
\li Enqueue$(M, \langle y,D[y]\rangle)$
\end{algleo}
\end{algleo}
\linonumber {\bf Step 2 (Coloring Process)}
\begin{algleo}
\li \While NonEmpty($M$)
\begin{algleo}
\li $\langle z, D[z]\rangle\leftarrow$ Dequeue($M$)
\li \If $\exists$ $nonred$ $q\in V$ s.t. $D[q]+w(q,z)=D[z]$
\begin{algleo}
\li \Then z is pink
\li \Else z is red; Enqueue($M$, all $z$'s children)
\end{algleo}
\li \End; \End
\end{algleo}
\end{algleo}
\linonumber {\bf Step3.a (Initialize the distance vector for red vertices)}
\begin{algleo}
\li \Foreach $red$ vertex $z$ \Do
\begin{algleo}
\li \If $z$ has no $nonred$ neighbor
\begin{algleo}
\li \Then $D[z]\leftarrow +\infty$; $P[z]\leftarrow$ Null
\li \Else
\begin{algleo}
\li let $u$ be the $best$ $nonred$ $neighbor$ of z
\li $D[z]\leftarrow D[u]+w(u,z)$; $P[z]\leftarrow u$
\li Enqueue$(Q,\langle z,D[z] \rangle)$
\end{algleo}
\li \End; \End; \End
\end{algleo}
\end{algleo}
\end{algleo}
\linonumber {\bf Step3.b: Step2 of Graph: Weight Decrease}
\end{algleo}

The worst-case time complexity for one edge change (either weight decreasing or increasing)
is $O(|\delta|\log|\delta|+\|\delta\|)$, where $|\delta|$ denotes the number of affected vertices
and $\|\delta\|$ the total number of both affected vertices and affected edges.

\section{Dynamic Shortest Hyperpath Problem}
\label{sec:PF}

We introduce some basic concepts of hypergraph~\cite{berge1976graphs} and define the static and the dynamic
shortest hyperpath problems. Some basic properties of the shortest hyperpaths are established and will be used in developing
the dynamic algorithms in subsequent sections.

\subsection{Hypergraph and Hyperpath}
Let $V$ be a finite set and $E$ a family of subsets of $V$. If for all elements $e_i\in E$, the following conditions are satisfied:
\[e_i\neq \emptyset,~~~~~~ \cup_{e_i\in E} \, e_i=V,\]
then the couple $H=(V,E)$ is called a \textit{(undirected) hypergraph}.
Each element $v\in V$ is called a \textit{vertex} and each element $e\in E$ a \textit{hyperedge}.

A \textit{weighted undirected hypergraph} is a triple $H=(V,E,w)$ with $w: E\rightarrow \{R^+\cup \{0\}\}$
being a nonnegative weight function defined for each hyperedge in $E$.

In a hypergraph, a hyperpath is defined as follows.

\begin{definition}
A {\em hyperpath} between two vertices $u$ and $v$ is a
sequence of hyperedges $\{e_0,e_1,\ldots,e_{m}\}$ such that $u \in e_0$,
$v \in e_{m}$, and $e_i \cap e_{i+1}\neq \emptyset$ for
$i=0,...,m-1$. A hyperpath is {\em simple} if non-adjacent hyperedges in the path are non-overlapping,
\ie $e_i\cap e_j=\emptyset,\forall j\neq i,i\pm1$.
\end{definition}

Let $L_e=\{e_0,\ldots,e_{m}\}$ be a hyperpath in a weighted hypergraph $H$. We define the weight of $L_e$ as:
\[
w(L_e)=\sum_{i=0}^{m} w(e_i).
\]

\subsection{Shortest Hyperpath and Relationship Tree}

Given two vertices $u$ and $v$, a natural question is to find the shortest hyperpath (in terms of the path weight)
from $u$ to $v$. Since the weight function is nonnegative, it suffices to consider only simple hyperpaths.
If the shortest hyperpath is not simple, we can always generate a simple hyperpath without increasing the
weight by deleting all the hyperedges between two overlapping non-adjacent hyperedges.

The dynamic shortest hyperpath problem can be similarly defined for a sequence $C=\{\delta_1,\delta_2,\ldots,\delta_l\}$ of
hyperedge changes. Hyperedge changes have the same four types as edge changes in a graph: weight increase,
weight decrease, hyperedge insertion, and hyperedge deletion.
Similarly, weight increase and hyperedge deletion will be treated together, so are weight decrease and hyperedge insertion.

In this paper, we consider the single-source shortest hyperpath problem: find the shortest hyperpaths from a
given source $s$ to all other vertices. The presentation of the paper focuses on undirected hypergraphs. However,
the two proposed dynamic algorithms apply to directed hypergraphs with minor modifications in their implementation details.

Below, we establish a basic property of shortest hyperpaths.

\begin{lemma}\label{lma:SP}
Let $L=\{e_1, e_2,\ldots,e_l\}$ be a shortest hyperpath from $s\in e_1$ to $z\in e_l$. Then for any vertex $v\in e_i\cap e_{i+1}$,
the hyperpath $L_v=\{e_1,e_2,\ldots,e_i\}$ is a shortest hyperpath from $s$ to $v$.
Furthermore, for any two vertices $u,v\in e_i\cap e_{i+1}$ (if there exist at least two vertices in $e_i\cap e_{i+1}$),
$D[u]=D[v]$.
\end{lemma}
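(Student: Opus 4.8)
The plan is to prove both statements with a single cut-and-paste (exchange) argument — the hyperpath analogue of the fact that every prefix of a shortest path is itself shortest.

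First I would establish the prefix claim by contradiction. Fix $i$ with $1\le i\le l-1$ and $v\in e_i\cap e_{i+1}$, and suppose $L_v=\{e_1,\dots,e_i\}$ is \emph{not} a shortest hyperpath from $s$ to $v$. Then there is some hyperpath $L_v'=\{f_0,\dots,f_k\}$ from $s$ to $v$ with $w(L_v')<w(L_v)$. I would then splice $L_v'$ onto the suffix of $L$, forming the sequence $L'=\{f_0,\dots,f_k,e_{i+1},e_{i+2},\dots,e_l\}$, and verify that $L'$ meets the definition of a hyperpath from $s$ to $z$: the endpoint conditions $s\in f_0$ and $z\in e_l$ are immediate; the consecutive-overlap conditions inside the $f$-block hold because $L_v'$ is a hyperpath, and inside the $e$-block because $L$ is; and the one new junction only requires $f_k\cap e_{i+1}\neq\emptyset$, which holds because $v\in f_k$ (as $L_v'$ ends at $v$) and $v\in e_{i+1}$ (by the hypothesis $v\in e_i\cap e_{i+1}$) — this is precisely where the hypothesis is used. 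Since the weight is additive over the sequence, $w(L)=w(L_v)+\sum_{j=i+1}^{l}w(e_j)$ and $w(L')=w(L_v')+\sum_{j=i+1}^{l}w(e_j)$, so $w(L')<w(L)$, contradicting that $L$ is a shortest hyperpath from $s$ to $z$. (If one wants $L'$ to be simple it can be trimmed without increasing its weight, as already noted in the text, but this is not needed for the contradiction.)

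The second statement then follows with no further work: for any two $u,v\in e_i\cap e_{i+1}$, the first part gives that $\{e_1,\dots,e_i\}$ is simultaneously a shortest hyperpath from $s$ to $u$ and from $s$ to $v$; being literally the same sequence of hyperedges it has a single weight, hence $D[u]=w(\{e_1,\dots,e_i\})=D[v]$. I expect the only delicate point to be the bookkeeping of the splice — confirming the junction overlap is nonempty and that the weight decomposes additively exactly as claimed — together with a quick sanity check of the degenerate cases ($i=1$, or $v=s$) against whatever convention is adopted for $D[s]$. There is no deeper obstacle: the lemma is the optimal-substructure property transported to hyperedges, with the common vertex of two consecutive hyperedges serving as the cut point.
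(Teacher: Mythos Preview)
Your proposal is correct and mirrors the paper's own proof almost exactly: both argue by contradiction, splice a strictly shorter $s$--$v$ hyperpath onto the suffix $\{e_{i+1},\dots,e_l\}$ to contradict optimality of $L$, and then observe that the common prefix $\{e_1,\dots,e_i\}$ being shortest for every vertex of $e_i\cap e_{i+1}$ forces equal distances. Your write-up is in fact more careful than the paper's about verifying the junction overlap $f_k\cap e_{i+1}\ni v$ and the additive weight decomposition, but the underlying argument is identical.
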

\begin{proof}
We will prove by contradiction. Assume that
 $L_v=\{e_1,e_2,\ldots,e_i\}$ is not a shortest hyperpath for $v$. Then there exists a different hyperpath $L'_v=\{e'_1,e'_2,\ldots,e'_k\}$ with $w(L'_v)<w(L_v)$. Then consider the hyperpath $L'=\{e'_1,e'_2,\ldots,e'_k,e_{i+1},e_{i+2},\ldots,e_{l}\}$, we have $w(L')<w(L)$ which contradicts the fact that $L$ is a shortest hyperpath to $z$. This completes the proof for the first part of the lemma. Furthermore, for any two nodes $u,v\in e_i\cap e_{i+1}$, since $L_v$ is the shortest hyperpath for both vertices, $D[v]=w(L_v)=D[u]$.
\end{proof}

Next, we introduce the concept of relationship tree that is needed in the proposed dynamic shortest hyperpath algorithm HE-DSP.
Since two adjacent hyperedges in a hyperpath may
overlap at more than one vertex, the shortest hyperpaths from $s$ to all other vertices do not generally form a tree in the original graph sense.
For the development of the dynamic shortest hyperpath algorithms, we introduce the concept of \emph{relationship tree}
to indicate the parent-child relationship along shortest hyperpaths. The concept can be easily explained in the example given in
Fig~\ref{fig:Rtree}. Let $\{e_1,e_2\}$ be a shortest hyperpath from $s$ to $v_4$. By Lemma~\ref{lma:SP}, $\{e_1\}$ is a shortest hyperpath for both $v_1$ and $v_2$. As illustrated in Fig~\ref{fig:Rtree},
there are $4$ possible relationship trees to indicate the parent-child relationship in these shortest hyperpaths. We will show
in Sec.~\ref{sec:DEDSP} that the choice of the relationship tree does not affect the correctness or performance of the proposed
algorithm HE-DSP.

\begin{figure}[htbp]
\centering
\begin{psfrags}
\psfrag{v1}[l]{$v_1$}
\psfrag{v2}[l]{$v_2$}
\psfrag{v3}[l]{$v_3$}
\psfrag{v4}[l]{$v_4$}
\psfrag{s}[c]{$s$}
\psfrag{e1}[l]{{\small $~e_1$}}
\psfrag{e2}[l]{{\small $~e_2$}}
\psfrag{S}[c]{$s~$}
\psfrag{V1}[c]{$v_1$}
\psfrag{V2}[c]{$v_2$}
\psfrag{V3}[c]{$v_3$}
\psfrag{V4}[c]{$v_4$}
\scalefig{0.3}\epsfbox{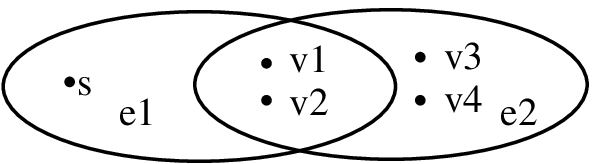}
\vspace{0.5em}
\scalefig{0.45}\epsfbox{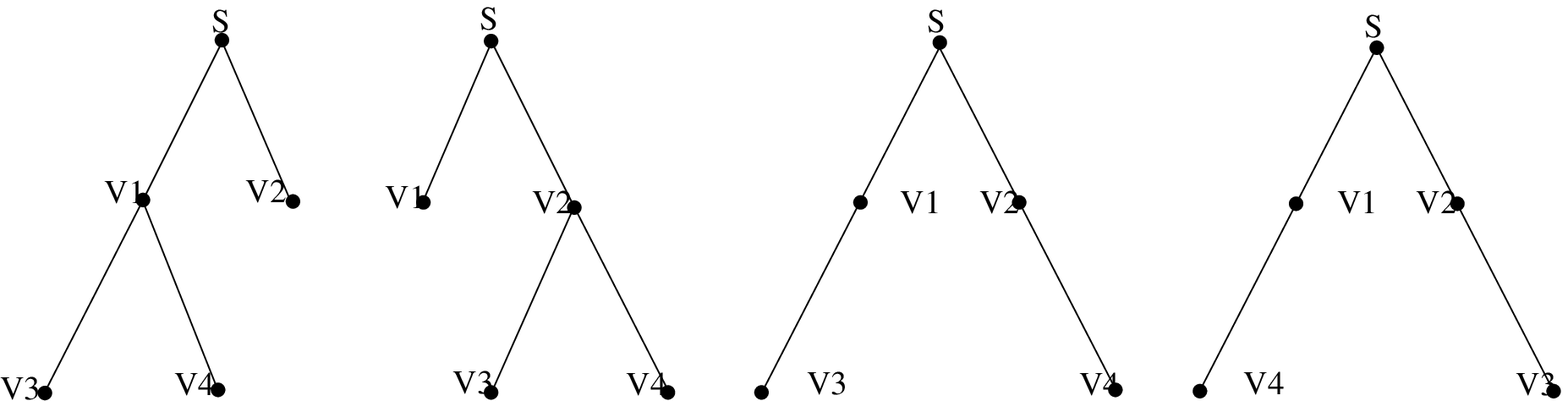}
\end{psfrags}
\caption{Hyperpaths and the associated relationship trees. }
\label{fig:Rtree}
\end{figure}

Similar notations are used for dynamic shortest hyperpath algorithms: $D[v]$ denotes the distance of a vertex $v$ to the source $s$ on
the shortest hyperpath, $P[v]$ the parent of $v$ in the chosen relationship tree associated with the shortest hyperpaths. A new notation
is $E[v]$, the hyperedge containing $v$ and $P[v]$ on the shortest hyperpath (\ie the hyperedge that leads to $v$ from $P[v]$ on the
shortest hyperpath). When it is necessary to distinguish the shortest distance before and after a weight change, $d[v]$ denotes the shortest distance before the change, $d'[v]$ the shortest distance after the change, and $D[v]$ the actual value stored in the data structure during the execution of the algorithm.

\section{Hyperedge Based Dynamic Shortest Path Algorithm}
\label{sec:DEDSP}

In this section, we propose HE-DSP. It is an extension of the dynamic Dijkstra's algorithm to hypergraphs.
The extension is more complex than Gallo's extension
of the static Dijkstra's algorithm, since the dynamic Dijkstra's algorithm relies on the tree structure of the shortest paths,
a structure no longer there for the shortest hyperpaths.

\subsection{Hyperedge Weight Decrease}
Consider that the weight of a hyperedge $\check{e}$ decreases to $w_{new}$.
Similar to the case for graphs, we know that the vertex $x\in \check{e}$ with $D[x]=\min_{v\in \check{e}}\{D[v]\}$ will not be affected.
We then check weather the other vertices in $\check{e}$ are affected by checking the inequality given in \eqref{eq:ineq},
and put all the affected vertices into a priority queue $Q$. The rest of the procedure is similar to that for graphs,
only when we update the distance of a vertex, we check all the hyperedges that contain this vertex.

\setcounter{algleo}{0}
\begin{algleo}
\linonumber {\bf HE-DSP: Weight Decrease$(\check{e},w_{new})$}.
\linonumber {\bf Step0 (Update the hypergraph)}
\begin{algleo}
\li $w(\check{e})\leftarrow w_{new}$
\end{algleo}
\linonumber {\bf Step1 (Determine the affected vertices in $e$)}
\begin{algleo}
\li $x\leftarrow \textrm{argmin}_{v\in \check{e}}\{D[v]\}$
\li \Foreach $v\in \check{e}$ such that $D[x]+w_{new}<D[v]$ \Do
\begin{algleo}
\li $D[v]\leftarrow D[x]+w_{new}$; $P[v]\leftarrow x$; $E[v]\leftarrow \check{e}$
\li Enqueue$(Q, \langle v,D[v]\rangle)$
\end{algleo}
\li \End
\end{algleo}
\linonumber {\bf Step2 (Iteratively enqueue and update affected vertices)}
\begin{algleo}
\li \While NonEmpty $(Q)$ \Do
\begin{algleo}
\li $\langle z,D[z]\rangle \leftarrow$ Dequeue$(Q)$
\li \Foreach $e\in E$ s.t. $z\in e$
\begin{algleo}
\li \Foreach $v\in e$
\begin{algleo}
\li \If $D[v]>D[z]+w(e)$ \Then
\begin{algleo}
\li $D[v]\leftarrow D(z)+w(e)$; $P[v]\leftarrow z$; $E[v]\leftarrow e$
\li Enqueue or Update$(Q,\langle v,D[v]\rangle)$
\end{algleo}
\li \End; \End; \End; \End
\end{algleo}
\end{algleo}
\end{algleo}
\end{algleo}
\end{algleo}

\begin{theorem}\label{thm:de}
If before the weight decrease, $D[v]=d[v]$, $E[v]$ and $P[v]$ are correct for all $v\in V$, then after the weight decrease, $D[v]=d'[v]$ and $E[v]$ and $P[v]$ are correctly updated.
\end{theorem}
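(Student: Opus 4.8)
The plan is to run the classical correctness argument for Dijkstra-style relaxation, with Lemma~\ref{lma:SP} supplying the structural facts that the shortest-path tree would supply in the graph case. Throughout, write $\check e$ for the hyperedge whose weight drops to $w_{new}$, call a vertex $v$ \emph{affected} if $d'[v]<d[v]$, and let $x$ be the vertex of $\check e$ minimizing $d[\cdot]$ (this is the vertex picked in Step~1, since $D[v]=d[v]$ when Step~1 runs). I would first record three elementary facts: (a) $d'[v]\le d[v]$ for all $v$, because lowering one weight cannot lengthen any hyperpath; (b) if $v$ is affected then every new-shortest hyperpath to $v$ contains $\check e$, since a hyperpath avoiding $\check e$ has the same weight under the old and new weight functions, so its existence forces $d[v]\le d'[v]$; (c) $x$ is not affected: if $s\in\check e$ then $d[x]\le d[s]=0$ so $d'[x]=d[x]=0$, while if $s\notin\check e$ and $x$ were affected, then a new-shortest hyperpath $\{e_1,\dots,e_m\}$ to $x$ has $e_k=\check e$ with $k\ge 2$, and picking $v_{k-1}\in e_{k-1}\cap\check e$ and applying Lemma~\ref{lma:SP} to the prefix $\{e_1,\dots,e_{k-1}\}$ (which avoids $\check e$) gives $d[v_{k-1}]=d'[v_{k-1}]$, whence $d'[x]\ge d'[v_{k-1}]+w_{new}=d[v_{k-1}]+w_{new}\ge d[x]$, a contradiction.

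The crux, and the step I expect to be the main obstacle, is the following structural lemma, which is exactly where the loss of the tree structure bites and where Lemma~\ref{lma:SP} earns its keep. Fix an affected vertex $z$ and a new-shortest hyperpath $L'=\{e_1,\dots,e_m\}$ to $z$; take $L'$ simple, so by (b) it contains $\check e$ exactly once, say $e_k=\check e$. Choose $v_0=s\in e_1$, $v_i\in e_i\cap e_{i+1}$ for $1\le i\le m-1$, and $v_m=z$; by Lemma~\ref{lma:SP} each prefix $\{e_1,\dots,e_i\}$ is a new-shortest hyperpath to $v_i$, so $d'[v_i]=\sum_{j\le i}w(e_j)$. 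I claim $v_0,\dots,v_{k-1}$ are unaffected and $v_k,\dots,v_m$ are affected: for $i<k$ the prefix to $v_i$ avoids $\check e$, so $d[v_i]=d'[v_i]$; for $i\ge k$, if $v_i$ were unaffected it would have an old-shortest (hence, since $w(\check e)$ strictly decreased, $\check e$-avoiding) hyperpath of weight $d'[v_i]$, and splicing $\{e_{i+1},\dots,e_m\}$ on at $v_i$ would give a new-shortest hyperpath to $z$ avoiding $\check e$, contradicting (b). In particular $v_{k-1}$ and $v_k$ both lie in $\check e$, with $v_{k-1}$ unaffected and $v_k$ affected, so in Step~1 we have $d[x]\le d[v_{k-1}]$ and $d[x]+w_{new}<d[v_k]=D[v_k]$; hence $v_k$ is enqueued, and because an old-shortest path to $x$ followed by $\check e$ is a hyperpath to $v_k$, $d'[v_k]\le d[x]+w_{new}\le d[v_{k-1}]+w_{new}=d'[v_k]$, so $v_k$ is enqueued with the \emph{exact} value $d'[v_k]$ and with the legitimate record $P[v_k]=x,\ E[v_k]=\check e$ (legitimate since $x$ is unaffected with correct frozen $D[x]=d'[x]$ and $d'[x]+w(\check e)=d'[v_k]$).

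Given this, the remainder is the standard Dijkstra induction, which I would only sketch. Priority-queue mechanics together with $w\ge 0$ make the sequence of dequeued values non-decreasing, and $d'[v]\le D[v]\le d[v]$ is an invariant of the algorithm, so a vertex is touched only when its $D$-value strictly drops below $d[v]$; consequently no unaffected vertex is ever touched, and there $D[v]$ stays $d[v]=d'[v]$. Inducting on the order of dequeues, when $z$ is dequeued I argue $D[z]=d'[z]$ with correct $P[z],E[z]$: if $z$ equals the first affected vertex $v_k$ on one of its new-shortest hyperpaths, it was enqueued in Step~1 with value $d'[z]$, which can only decrease and never below $d'[z]$; otherwise its predecessor $v_{m-1}$ on that hyperpath is affected with $d'[v_{m-1}]\le d'[z]$, hence dequeued no later than $z$ with value $d'[v_{m-1}]$ by the inductive hypothesis, and the inner loop of Step~2 over the hyperedge $e_m\ni v_{m-1},z$ forces $D[z]\le d'[v_{m-1}]+w(e_m)=d'[z]$ by the time $z$ is dequeued; combined with $D[z]\ge d'[z]$ this gives equality, and the $(P,E)$ recorded at the triggering relaxation is valid. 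Since every $P[\cdot]$ points either to $x$ or to an earlier correctly-dequeued vertex, and consecutive recorded hyperedges overlap at the common vertex, the $P,E$ pointers assemble into a genuine hyperpath from $s$ of weight $d'[z]$, i.e. a shortest one.

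Finally, since the while-loop drains $Q$, and every affected vertex enters $Q$ (the $v_k$'s via Step~1, the remaining $v_{k+1},\dots,v_m$ via the relaxation chain along $L'$), every affected vertex is dequeued and hence ends with the correct $D,P,E$; together with the untouched unaffected vertices this gives $D[v]=d'[v]$ and correct $E[v],P[v]$ for all $v$. The only technical nuisance, exactly as in the graph algorithm of \cite{frigioni2000fully}, is that zero-weight hyperedges create ties in $Q$; these are handled by fixing any tie-break rule and reading the relevant inequalities non-strictly.
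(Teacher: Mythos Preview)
Your proposal is correct and follows essentially the same three-part skeleton as the paper's proof (Appendix~A): (i) the minimum-distance vertex $x\in\check e$ is unaffected (the paper's Lemma~4), (ii) a vertex is enqueued iff it is affected (Lemma~5), and (iii) each dequeued vertex has $D[z]=d'[z]$ by a Dijkstra-style induction on dequeue order (Lemma~6). Your structural lemma---that along a simple new-shortest hyperpath the $v_i$ with $i<k$ are unaffected and those with $i\ge k$ are affected---is a more explicit version of the argument the paper threads through its Lemma~5 and the inductive step of Lemma~6, and your explicit observation that every affected $v_k\in\check e$ is enqueued in Step~1 with the \emph{exact} value $d'[v_k]=d[x]+w_{new}$ matches the base case of the paper's Lemma~6. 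Your treatment is if anything a bit more careful than the paper's (separating the case $s\in\check e$, making fact~(b) explicit, and flagging the zero-weight tie issue), but the route is the same.
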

\begin{proof}
See Appendix~A.
\end{proof}

\subsection{Hyperedge Weight Increase}

 The coloring process in the graph case relies on the tree structure of the shortest paths,
which is no longer present in the shortest hyperpaths. Our solution is to use a relationship tree
for the coloring process, and we prove the correctness of this approach regardless of
the choice of the relationship tree.

Consider that the weight of a hyperedge $\check{e}$ increases to $w_{new}$.
First, we redefine the color of a vertex $v$ based on the chosen relationship tree.
\begin{itemize}
\item[\textbf{(1)}]  $v$ is colored {\em white} if $d'[v]=d[v]$ while keeping the current $P[v]$ and $E[v]$.
\item[\textbf{(2)}]  $v$ is colored {\em pink} if $d'[v]=d[v]$, but only possible through a new $P[v]$ or $E[v]$ or both.
\item[\textbf{(3)}]  $v$ is colored {\em red} if $d'[v]<d[v]$.
\end{itemize}

With the above modified definitions of colors, the same coloring process as in the graph case can be carried out
using a relationship tree. The algorithm is given below.

\setcounter{algleo}{0}
\begin{algleo}
\linonumber {\bf HE-DSP: Weight Increase}$(\check{e},w_{new})$.
\linonumber {\bf Step0 (Update the hypergraph)}
\begin{algleo}
\li $w(\check{e})\leftarrow w_{new}$
\end{algleo}
\linonumber {\bf Step1 (Determine the affected vertices in $e$)}
\begin{algleo}
\li \Foreach $v\in \check{e}$ s.t. $E[v]=\check{e}$ \Do
\begin{algleo}
\li Enqueue$(M, \langle v,D[v]\rangle)$
\end{algleo}
\end{algleo}
\linonumber {\bf Step2 (Coloring process)}
\begin{algleo}
\li \While NonEmpty($M$)
\begin{algleo}
\li $\langle z, D[z]\rangle\leftarrow$ Dequeue($M$)
\li \If $\exists$ $nonred$ $q\in V$ s.t. $\exists e\in E$ with $q,z\in e$ and $D[q]+w(e)=D[z]$
\begin{algleo}
\li \Then z is pink; $P[z]=q$; $E[z]=e$;
\li \Else z is red; Enqueue($M$, all $z$'s children)
\end{algleo}
\li \End; \End
\end{algleo}
\end{algleo}
\linonumber {\bf Step3.a (Initialize the distance vector for red vertices)}
\begin{algleo}
\li \Foreach $red$ vertex $z$ \Do
\begin{algleo}
\li \If $z$ has no $nonred$ neighbor
\begin{algleo}
\li \Then $D[z]\leftarrow +\infty$; $P[z]\leftarrow$ Null
\li \Else
\begin{algleo}
\li let $u$ be the $best$ $nonred$ $neighbor$ of z
\li $E[z]\leftarrow \textrm{argmin}_{e\in E, e\ni u,z}\{w(e)\}$;
\li $D[z]\leftarrow D[u]+w(E[z])$; $P[z]\leftarrow u$;
\li Enqueue$(Q,\langle z,D[z] \rangle)$
\end{algleo}
\li \End; \End; \End
\end{algleo}
\end{algleo}
\end{algleo}
\linonumber {\bf Step3.b: Step2 of HE-DSP: Weight Decrease}
\end{algleo}

The theorem below states the correctness of the algorithm.
\begin{theorem} \label{thm:in}
If before the weight increase, $D[v]=d[v]$, $E[v]$ and $P[v]$ are correct for all $v\in V$, then after the weight increase, $D[v]=d'[v]$ and also $E[v]$ and $P[v]$ are correctly updated.
\end{theorem}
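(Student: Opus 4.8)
The plan is to partition $V$ by colour and to verify the algorithm's output class by class, mirroring the proof of Theorem~\ref{thm:de} (whose Step~2 I will reuse as a black box). I will repeatedly use two facts: a weight \emph{increase} can only raise the weight of every hyperpath, so $d'[v]\ge d[v]$ for all $v$; and Lemma~\ref{lma:SP}, which peels off the last hyperedge of a shortest hyperpath. Let $A=\{v:E[v]=\check e\}$ be the directly affected vertices and $T_A$ the union of the subtrees of the chosen relationship tree rooted at the vertices of $A$. Since Step~1 enqueues exactly $A$ and Step~2 only enqueues relationship-tree children of vertices it colours red, the only vertices whose stored data can change lie in $T_A$.

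First I would dispose of the vertices outside $T_A$: the hyperpath obtained by following $P[\cdot],E[\cdot]$ from such a $v$ back to $s$ has weight $d[v]$ and cannot contain $\check e$ (else some ancestor $u$ of $v$, or $v$ itself, would have $E[u]=\check e$, i.e.\ $v\in T_A$), so its weight is unchanged and $d'[v]=d[v]$ with the stored $D[v],P[v],E[v]$ still correct — consistent with the algorithm leaving $v$ untouched; the same one-line estimate (its stored path would now have strictly larger weight) shows a vertex of $A$ cannot remain white. Next I would prove the equivalence that drives the colouring: for every $z$, $d'[z]=d[z]$ if and only if $z$ has a shortest hyperpath (of weight $d[z]$, unaffected by the update) that avoids $\check e$. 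The ``if'' part is immediate from $d'[v]\ge d[v]$; for ``only if'', a new shortest hyperpath to $z$ had old weight $\le d'[z]=d[z]$, so it was already a shortest hyperpath, and it cannot use $\check e$ (or its weight would have strictly increased); by Lemma~\ref{lma:SP} its penultimate vertex $q$, lying on a hyperedge $e_\ell\ne\check e$, then satisfies $d'[q]=d[q]=d[z]-w(e_\ell)$.

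With this equivalence I would establish correctness of the colouring (Step~2) by induction on the old distance $d$, which is the order in which $M$ is processed. Soundness of ``pink'' is a one-line inequality: if the test succeeds for $z$ with witnesses $q,e$, then a non-red $q$ still carries its correct label $D[q]=d'[q]$, so $d'[z]\le d'[q]+w(e)=D[z]=d[z]\le d'[z]$, forcing $d'[z]=d[z]$ with $P[z]=q,E[z]=e$ realising a shortest hyperpath. For completeness, if $d'[z]=d[z]$ the equivalence gives a shortest hyperpath to $z$ avoiding $\check e$ whose penultimate vertex $q$ has $d'[q]=d[q]=d[z]-w(e_\ell)<d[z]$; by the induction hypothesis together with the outside-$T_A$ analysis, $q$ is not red and carries $D[q]=d'[q]$, so $z$'s test succeeds via $q,e_\ell$ and $z$ is coloured pink with valid pointers. (Ties, possible only through zero-weight hyperedges, where $d[q]=d[z]$, are the one place needing a little care — one fixes a consistent tie-breaking order in $M$.) I expect this completeness step to be the main obstacle: unlike the graph case, where the new parent of a pink vertex is read off the tree of shortest paths, here a new shortest hyperpath need not lie in the relationship tree, and it is precisely the ``avoids~$\check e$'' equivalence that supplies the needed non-red witness. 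It follows that after Step~2 the red set equals $\{v:d'[v]>d[v]\}$ and every non-red vertex carries its correct final $D,P,E$.

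Finally I would treat the red vertices via Steps~3.a--3.b. For a red $z$ reachable in the new hypergraph, walk a new shortest hyperpath from $s$ to the first red vertex $z_{j+1}$ on it: its predecessor $z_j$ is a non-red neighbour with $D[z_j]=d'[z_j]$ and $d'[z_{j+1}]=d'[z_j]+w(e_{j+1})$, so Step~3.a's assignment $D[z_{j+1}]\leftarrow\min_{u\ \mathrm{nonred},\,u\sim z_{j+1}}\bigl(D[u]+\min_{e\ni u,z_{j+1}}w(e)\bigr)$ is a valid upper bound on $d'[z_{j+1}]$ that is tight for these ``entry'' vertices (and is set to $+\infty$, correctly, exactly when $z$ has no non-red neighbour, i.e.\ when $z$ is unreachable in the new hypergraph). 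Running the relaxation of Step~2 of HE-DSP: Weight Decrease then propagates the correct $d'[\cdot],P[\cdot],E[\cdot]$ over the red region, which is exactly the content of Theorem~\ref{thm:de}; a non-red $v$ adjacent to a relaxed red $z$ is never wrongly updated since $d'[v]\le d'[z]+w(e)\le D[z]+w(e)$. Combining the three classes yields $D[v]=d'[v]$ with correct $P[v],E[v]$ for all $v$, and since nothing above used any property of the relationship tree beyond its being a valid one, the conclusion holds for every choice of relationship tree.
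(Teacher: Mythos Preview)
Your proof is correct and follows essentially the same route as the paper: both establish correctness of the colouring step by induction on the dequeue order of $M$ (using that a currently non-red neighbour with smaller $D[\cdot]$ cannot later turn red), and then dispatch Step~3 by recognising it as a Dijkstra-style relaxation on the red region. Your explicit ``$d'[z]=d[z]$ iff some shortest hyperpath to $z$ avoids $\check e$'' equivalence and the entry-vertex argument for Step~3 are more detailed than the paper's rather terse appeal to Gallo's algorithm, but the underlying ideas coincide.
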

\begin{proof}
See Appendix~B.
\end{proof}

\section{Dimension Reduction based Dynamic Shortest Path Algorithm}
\label{sec:D-1SESP}

In this section, we propose DR-DSP. When the dynamic problem degenerates to the
static problem, DR-DSP leads to an alternative algorithm for solving the static shortest hyperpath problem.

\subsection{The Static Case: DR-SP}

We first consider the static version of the algorithm (referred to as DR-SP), which captures the
basic idea of dimension reduction.

The proposed DR-SP algorithm is based on the following theorem in which we show that for a
general hypergraph $H$, the weight $\omega(L^*)$ of the shortest path $L^*$ of $H$ is equal to the
shortest path $L_G^*$ of a weighted graph $G$ derived from $H$. Specifically, corresponding to every
hyperedge $e$ in $H$, $G$ contains a clique defined on the vertices of $e$.

\begin{theorem} \label{thm:SP}
Let $H=(V,E,w)$ be a hypergraph, and $G=(V,\tilde{E})$ the
underlying graph of $H$ where an edge $\te\in \tE$ if and only if $\exists e\in E$ such that $\te \subset e$.
For each edge $\te$ in $G$, its weight $w_G (\te)$ is defined as
\begin{eqnarray} \label{eqn:w_G_e}
w_G (\te) = \underset{\{e\in E:~e\supseteq \te\}}{\min}~w(e).
\end{eqnarray}
Let $L^*$ and $L^*_G$ be the shortest paths from $u \in V$ to
$v \in V$ in $H$ and $G$, respectively. Then we have that
\[
w(L^*)=w_G(L^*_G).
\]
\end{theorem}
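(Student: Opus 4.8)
The plan is to prove $w(L^*) = w_G(L^*_G)$ by establishing inequalities in both directions, using a natural correspondence between hyperpaths in $H$ and walks in $G$.

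\medskip\noindent\textbf{Step 1: $w_G(L^*_G) \le w(L^*)$.} Take a shortest simple hyperpath $L^* = \{e_0, e_1, \ldots, e_m\}$ from $u$ to $v$ in $H$. For each consecutive pair $e_i \cap e_{i+1} \neq \emptyset$, pick a vertex $v_i \in e_i \cap e_{i+1}$; also set $v_{-1} = u$ and $v_m = v$. Then $v_{i-1}, v_i \in e_i$, so the pair $\{v_{i-1}, v_i\}$ is an edge $\te_i$ of $G$ (or a degenerate pair if $v_{i-1} = v_i$, which we may simply drop), and by \eqref{eqn:w_G_e} we have $w_G(\te_i) \le w(e_i)$. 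Concatenating $\te_0, \te_1, \ldots, \te_m$ yields a walk from $u$ to $v$ in $G$ of total weight $\sum_i w_G(\te_i) \le \sum_i w(e_i) = w(L^*)$. Since the shortest path weight $w_G(L^*_G)$ is no larger than the weight of any walk from $u$ to $v$, this gives $w_G(L^*_G) \le w(L^*)$.

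\medskip\noindent\textbf{Step 2: $w(L^*) \le w_G(L^*_G)$.} Take a shortest path $L^*_G = (x_0 = u, x_1, \ldots, x_k = v)$ in $G$. For each edge $\{x_{j-1}, x_j\}$, by definition of $\tE$ and \eqref{eqn:w_G_e} there is a hyperedge $e^{(j)} \in E$ with $\{x_{j-1}, x_j\} \subseteq e^{(j)}$ and $w(e^{(j)}) = w_G(\{x_{j-1}, x_j\})$. Now consider the sequence $\{e^{(1)}, e^{(2)}, \ldots, e^{(k)}\}$: consecutive hyperedges $e^{(j)}$ and $e^{(j+1)}$ both contain $x_j$, so their intersection is non-empty; also $u \in e^{(1)}$ and $v \in e^{(k)}$. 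Hence this sequence is a valid hyperpath from $u$ to $v$ in $H$, with weight $\sum_j w(e^{(j)}) = \sum_j w_G(\{x_{j-1}, x_j\}) = w_G(L^*_G)$. Since $L^*$ is a shortest hyperpath, $w(L^*) \le w_G(L^*_G)$.

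\medskip\noindent\textbf{Combining and obstacles.} The two inequalities give $w(L^*) = w_G(L^*_G)$. The only delicate points are bookkeeping rather than depth: in Step 1 I must handle the case where the chosen representative vertices coincide (so a ``clique edge'' degenerates to a single vertex), which is harmless since dropping such a step only decreases the walk weight; and I should note that in Step 1 I may produce a walk rather than a simple path, but this is fine because $w_G(L^*_G)$ — being the minimum over all $u$--$v$ paths, equivalently walks, in a graph with nonnegative weights — is bounded above by the weight of any such walk. I also implicitly use nonnegativity of $w$ (guaranteed by the hypergraph model) to justify restricting attention to simple hyperpaths in $H$, exactly as the paper already argued before Lemma~\ref{lma:SP}. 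I expect the main (very mild) obstacle to be stating the walk-to-path reduction cleanly enough that the argument reads correctly in both the degenerate-edge and non-simple-walk cases.
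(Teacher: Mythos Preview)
Your proposal is correct and follows essentially the same two-inequality argument as the paper: construct a hyperpath in $H$ from a shortest path in $G$ (your Step~2, the paper's first paragraph) and construct a walk in $G$ from a shortest hyperpath in $H$ via intersection representatives (your Step~1, the paper's second paragraph). If anything, you are more careful than the paper about the walk-versus-path and degenerate-edge bookkeeping, which the paper leaves implicit.
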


\begin{proof}
First,  for each shortest path $L^*_G$ in $G$, we can obtain a corresponding hyperpath $L$ in $H$ with the same
weight based on (\ref{eqn:w_G_e}), therefore we have that
\[
w_G (L^*_G)=w(L)\geq w(L^*).
\]
Then it suffices to show that there exists a path $L_G$ in $G$ such
that $w_G (L_G)\leq w(L^*)$, which implies that $w_G (L^*_G)\leq w_G
(L_G)\leq w(L^*)$.

Assume that $L^* = \{e_0,e_1,\ldots,e_{k-1}\}$ is a shortest hyperedge
path from $v_0$ to $v_k$ in $H$ where $v_0 \in e_0$ and $v_k \in
e_{k-1}$. Let $v_i \in e_{i-1} \cap e_i$ $(i=1,2,...,k-1)$ be one of
the vertices in the intersection of hyperedges $e_{i-1}$ and
$e_i$. Construct a path $L_G = \{v_0,v_1,...,v_k\}$ in the graph $G$.
For each edge $\te_i = \{v_i,v_{i+1}\}$ $(i = 0,1,...,k-1)$, since
$\te_i \subseteq e_i$, it follows from (\ref{eqn:w_G_e}) that
\[
w_G (\te_i)\leq w(e_i).
\]
Thus,
\[
w_G (L_G) = \sum_{i=0}^{k-1} w_G (\te_i) \leq \sum_{i=0}^{k-1} w(e_i)
= w(L^*),
\]
\ie $w_G (L_G)\leq w(L^*)$.
\end{proof}
It follows from Theorem \ref{thm:SP} that the shortest path in a general hypergraph can be obtained by
applying Dijkstra's algorithm to the underlying graph $G$ with weights modified as stated in the theorem.

\subsection{The Dynamic Case: DR-DSP}

In the dynamic case, a sequence $C=\{\delta_1,\delta_2,\ldots,\delta_l\}$ of
hyperedge changes in the hypergraph $H$ results in a sequence 
of edge changes in the underlying graph $G$. For each hyperedge change $\delta_i$, DR-DSP first updates the underlying
graph $G$ to locate all the changed edges caused by $\delta_i$. In the next step, DR-DSP updates the shortest path tree
in the underlying graph $G$.

Consider first the graph update. A change to a hyperedge $e$ only affects those edges in $G$ that are
subsets of $e$, \ie a hyperedge change is localized in the underlying graph $G$. Furthermore, since the weight of an edge in $G$
is the minimum weight of all hyperedges containing it, not all edges in $G$ that are subsets of $e$ will change weight. Based on
these observations, we propose a special data structure and procedure for updating the underlying graph $G$ without regenerating
the graph from scratch using Step~1 of DR-SP.

At the initialization stage of the algorithm, a priority queue $M_{uv}$ for each pair of vertices $(u,v)$ in the hypergraph
is established to store the weights of all hyperedges that
contain both $u$ and $v$. When a change occurs to hyperedge $e$, all the priority queues $M_{uv}$ associated with the pair of vertices
$(u,v)$ that are contained in $e$ are updated with the new weight of $e$. Thus, the top of these priority queues always maintain the
weight for edge $(u,v)$ in the underlying graph $G$ for each $(u,v)$.
Below is a pseudo code implementation of the proposed procedure.

\setcounter{algleo}{0}
\begin{algleo}
\linonumber {\bf Graph Update$(\check{e},w_{new})$}.
\begin{algleo}
\li \Foreach $u,v\in \check{e}$
\begin{algleo}
\li Update($M_{uv},<\check{e},w_{new}>$);
\li $w_{uv}\leftarrow$Peek($M_{uv}$);
\end{algleo}
\li \End;
\end{algleo}
\end{algleo}

After the underlying graph $G$ is updated, we are now facing a dynamic shortest path problem in a graph. However, since a single hyperedge
change can result in multiple edge changes in $G$, we need to handle a batch problem. While existing batch algorithms and
iterative single-change algorithms for graphs can be directly applied here, we show that the batch problem we have at hand
has two unique properties that can be exploited to improve the efficiency of the algorithm.

\begin{property} \label{prt:oneside}
The edge changes in $G$ caused by a hyperedge change are either all weight decreases or all weight increases.
\end{property}

\begin{property} \label{prt:clique}
All changed edges in $G$ caused by a hyperedge change belong to a clique in $G$.
\end{property}

\subsection{Hyperedge Weight Decrease}
If the weight of hyperedge $\check{e}$ decreases to $w_{new}$ , by Theorem~\ref{thm:SP} and Property~\ref{prt:oneside},
there are (possibly) several edge-weight decreases in the underlying graph $G$. Therefore similar to HE-DSP,
there is at least one unaffected node $x=\textrm{argmin}_{v\in \check{e}}\{D[v]\}$. By Property \ref{prt:clique}, these
affected edges are contained in a clique derived from the changed hyperedge; therefore it is sufficient to determine the distance
of every node $v$ (other than $x$) in the original changed hyperedge $e$ by checking $D[x]+w_{new}<D[v]$. And we can initialize
the priority queue with those nodes whose weight decreases. After that, the procedure is similar to that in the graph case.

\setcounter{algleo}{0}
\begin{algleo}
\linonumber {\bf DR-DSP: Weight Decrease$(\check{e},w_{new})$}.
\linonumber {\bf Step0 (Update the hypergraph and $G$)}
\begin{algleo}
\li $w(\check{e})\leftarrow w_{new}$
\li {\bf Graph Update}($\check{e},w_{new}$)
\end{algleo}
\linonumber {\bf Step1 of HE-DSP: Weight Decrease}
\linonumber {\bf Step2 of Graph: Weight Decrease}
\end{algleo}

\subsection{Hyperedge Weight Increase}
If the weight of hyperedge $\check{e}$ increases to $w_{new}$, by Theorem~\ref{thm:SP} and Property~\ref{prt:oneside}, there are (possibly) several
edge-weight increases in the underlying graph $G$. Similar to the single-change case in graph, there is at least
one unaffected node $x=\textrm{argmin}_{v\in \check{e}}\{D[v]\}$. Then another node $v\in \check{e}$ is affected only if $E[v]=\check{e}$, \ie
$\check{e}$ is on its shortest hyperpath. We use all such nodes to initialize the priority queue $M$. The rest is similar to
the procedure of Graph: Weight Increase.

\setcounter{algleo}{0}
\begin{algleo}
\linonumber {\bf DR-DSP: Weight Increase$(\check{e},w_{new})$}.
\linonumber {\bf Step0 (Update the hypergraph and $G$)}
\begin{algleo}
\li $w(\check{e})\leftarrow w_{new}$
\li {\bf Graph Update}($\check{e},w_{new}$)
\end{algleo}
\linonumber {\bf Step1 of HE-DSP: Weight Increase}
\linonumber {\bf Step2 of Graph: Weight Increase}
\linonumber {\bf Step3.a of Graph: Weight Increase}
\linonumber {\bf Step3.b of Graph: Weight Increase}
\end{algleo}

\section{Time Complexity Analysis}
\label{sec:PC}

We analyze the time complexity of the two proposed dynamic algorithms. We show that for different scenarios,
each algorithm has its own advantage. We also consider the static case and show that the static version of DR-DSP has the same
complexity as Gallo's algorithm for a general hypergraph and lower complexity for a simplicial complex.

\subsection{The Static Shortest Hyperpath Problem}

Given a hypergraph $H=(V,E,w)$, let $n=|V|$ denote the number of vertices in $H$, and
$\Phi=\sum_{e\in E} |e|^2$ where $|e|$ is the cardinality of $e$. For a simplicial complex, let $m$ be
the number of facets, and $d$ the maximum degree of the facets.
\begin{theorem} \label{thm:STC}
The time complexities of Gallor's algorithm and DR-SP for general hypergraphs and simplicial complexes are as follows.
\begin{table}[h!]
\centering
\begin{tabular}{||c|c|c||}
\hline\hline
Algorithm & General Hypergraph & Simplicial Complex \\
\hline
Gallo & $O(n\log n+\Phi)$ & $O(n\log n+d^22^dm)$\\
\hline
DR-SP & $O(n\log n+\Phi)$ & $O(n\log n+d2^dm)$\\
\hline\hline
\end{tabular}
\end{table}\
\end{theorem}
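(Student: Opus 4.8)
The plan is to establish each of the four complexity entries by analyzing the dominant cost of the corresponding algorithm, treating the general-hypergraph and simplicial-complex cases separately but with a common template: the shortest-path computation in the underlying (or working) graph contributes an $O(n\log n)$ term from Dijkstra with a Fibonacci heap, and the remaining term comes from the cost of setting up the weighted graph that Dijkstra runs on. First I would recall Gallo's algorithm: it maintains a priority queue over vertices and, when a vertex is extracted, it scans every hyperedge containing it and relaxes all vertices in that hyperedge. Summing the relaxation work over all hyperedges gives $\sum_{e\in E}|e|^2=\Phi$ work, plus $O(n\log n)$ for the $n$ \textsc{ExtractMin}/\textsc{DecreaseKey} operations, yielding $O(n\log n+\Phi)$. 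For the simplicial-complex specialization, the hyperedge set consists of all subsets of the $m$ facets; a facet of degree $d$ contributes up to $2^d$ hyperedges, each of cardinality at most $d$, so the relaxation term becomes $\sum$ over all faces of (face size)$^2$, which is $O(d^2 2^d m)$ (each facet generates $2^d$ faces, each costing $O(d^2)$). Hence Gallo on a simplicial complex costs $O(n\log n+d^2 2^d m)$.

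Next I would bound DR-SP. By Theorem~\ref{thm:SP} and the comment following its proof, DR-SP runs Dijkstra on the underlying graph $G$ whose edge weights are given by (\ref{eqn:w_G_e}). The two cost components are: (i) building $G$ with the correct weights, and (ii) running Dijkstra on $G$. For (i), building $G$ amounts to, for each hyperedge $e$, iterating over all $\binom{|e|}{2}=O(|e|^2)$ pairs of its vertices and updating a running minimum for that pair; this is $\Theta(\sum_{e\in E}|e|^2)=\Theta(\Phi)$ total. For (ii), $G$ has at most $\Phi$ edges and $n$ vertices, so Dijkstra with a Fibonacci heap costs $O(n\log n+|\tilde E|)=O(n\log n+\Phi)$. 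Adding the two gives $O(n\log n+\Phi)$ for a general hypergraph, matching Gallo. For a simplicial complex, the key observation is that although a facet of degree $d$ has $2^d$ faces, only the $\binom{d}{2}=O(d^2)$ \emph{pairs} of its vertices matter for constructing $G$: the edge $\{u,v\}$ of $G$ gets its weight from the minimum over all hyperedges containing $\{u,v\}$, and it suffices to take this minimum over the facets directly (since any face containing $\{u,v\}$ has weight at least that of the facet containing it). So step (i) costs only $O(d^2 m)$. The underlying graph $G$ then has at most $d^2 m$ edges, and, since $d^2 m \le d\,2^d m$ (as $d\le 2^d$), Dijkstra costs $O(n\log n+d 2^d m)$; more carefully, $|\tilde E|=O(\min(n^2, d^2 m))$, and the stated bound $O(n\log n+d2^d m)$ holds a fortiori. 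This gives the last table entry.

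I would present the argument as four short paragraphs, one per table cell, each of the form ``the $O(n\log n)$ term is the Dijkstra heap cost; the second term is [relaxation work / graph-construction work], which equals \ldots.'' The one point needing care — and the main obstacle — is the simplicial-complex bookkeeping: one must be precise that DR-SP's advantage comes from never enumerating the $2^d$ faces of a facet during graph construction (it works with vertex pairs, of which there are only $O(d^2)$ per facet), whereas Gallo is forced to enumerate all faces because its relaxation step ranges over hyperedges, incurring the extra factor $2^d$ before the $d^2$. I would also note the implicit assumption that the number of facets $m$ is at least a constant and that $d^2 2^d m$ dominates lower-order terms like the $O(m)$ cost of listing the facets, and that where the analysis writes $\Phi$ or $d^2 2^d m$ for the edge count of $G$ we may freely replace it by $\min(n^2,\cdot)$ without changing the stated big-$O$, since the $n\log n$ term already accounts for the dense case. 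No part of this requires a delicate argument beyond the combinatorial counting just described, so the proof is essentially a careful accounting rather than a conceptual difficulty.
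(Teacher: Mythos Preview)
Your treatment of three of the four table entries---Gallo on general hypergraphs, Gallo on simplicial complexes, and DR-SP on general hypergraphs---matches the paper's argument essentially line for line: the $O(n\log n)$ heap cost plus the $\Phi$ relaxation/construction cost, with the simplicial specialization for Gallo obtained by substituting $\Phi=O(d^{2}2^{d}m)$.

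The genuine gap is in your DR-SP simplicial-complex analysis. You assert that ``it suffices to take this minimum over the facets directly (since any face containing $\{u,v\}$ has weight at least that of the facet containing it).'' That monotonicity is nowhere assumed: the weight function $w:E\to\mathbb{R}^{+}\cup\{0\}$ is arbitrary, and a low-dimensional face may carry a \emph{smaller} weight than every facet containing it (indeed, in the paper's own Enron weight function, lower-cardinality hyperedges get smaller weights). Hence restricting the minimum in~(\ref{eqn:w_G_e}) to facets does not in general compute $w_G(\tilde e)$ correctly, and your $O(d^{2}m)$ construction is not a valid implementation of Step~1 of DR-SP. The fact that you then relax $d^{2}m$ to $d\,2^{d}m$ to match the table does not repair the argument; the procedure you describe simply returns wrong edge weights.

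The paper obtains the $O(d\,2^{d}m)$ bound by a different, correct mechanism: a top-down dynamic program over the face lattice. Starting from the facets (where $w_G$ equals $w$) and descending in dimension, it sets for each $i$-dimensional face $s$
\[
w_G(s)=\min\Bigl\{\,w(s),\ \min\{\,w_G(s'):s'\supset s,\ \dim s'=i+1\,\}\Bigr\}.
\]
Within a $d$-dimensional facet an $i$-face has $d-i$ cofaces of dimension $i+1$, so each face costs $O(d)$ comparisons rather than the $O(d^{2})$ relaxations Gallo pays; summing over the $\binom{d+1}{i+1}$ faces of each dimension gives $\sum_{i=1}^{d-1}\binom{d+1}{i+1}(d-i)=O(d\,2^{d})$ per facet and $O(d\,2^{d}m)$ overall. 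The missing idea in your proposal is precisely this: you cannot avoid visiting all $2^{d}$ faces (each has its own weight that may realize the minimum), but you \emph{can} amortize so that each face costs only $O(d)$ rather than $O(d^{2})$.
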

\begin{proof}
The time complexity of DR-SP mainly comes from Steps~1 and 2. Step~2 is essentially applying Dijkstra's algorithm to a
graph with $n$ vertices and $\tilde{m}$ edges where $\tilde{m}$ is the number of edges in the underlying graph $G$.
The running time is thus $O(n\log n+\tilde{m})$. An implementation of Step~1
is to obtain the edge weight $w_G(\te)$ based on (\ref{eqn:w_G_e}). Therefore the time complexity for Step~1
is $O(\sum_{e\in E}|e|^2)$, \ie $O(\Phi)$. With $\tilde{m}$ upper bounded by $\Phi$ (since for each $e\in E$,
there are at most $|e|(|e|-1)/2$ edges in $G$), we arrive at the total time complexity of DRSP.

For Gallo's Algorithm, similar to Dijkstra's algorithm, the time complexity is mainly in updating the neighbors of the non-fixed vertex $z$
with the minimal distance $D[z]$. For each $z$, the algorithm scans all the hyperedges containing $z$.
For each pair of vertices $(u,v)\in e$, $e$ is scanned twice. Therefore the total number of such operations is $\Phi=\sum_{e\in E}|e|^2$.
Also, extracting $z$ from the priority queue implemented by a fibonacci heap takes $O(\log n)$ time. The total time complexity
of Gallo's algorithm thus follows.

For a simplicial complex, $\Phi=O(d^22^dm)$, the complexity of Gallo's algorithm thus follows.
For DR-SP, exploiting the property that the edge set is closed under the subset operation in a simplicial complex,
we can use a top-down scheme in Step~1 of DR-SP to calculate the weight $w_G (s)$ inductively
with respect to the dimension of a facet as follows:
\[
w_G (s) = \min\{w(s),\{w_G (s')|~s' \supset s \textrm{ and dim}[s'] = i+1\}\},
\]
where $w_G (s') = w(s')$ for the facet $s'$.  The time complexity for Step~1 can then be improved.
Because each $i-$dimensional face is associated with $d-i$ comparisons. Thus, the running time of Step~1 for each $d$-dimensional facet is given by
\[
\sum_{i=1}^{d-1} (^{d+1}_{i+1})(d-i)=O(d2^d).
\]
Therefore the time complexity for Step~1 is $O(d2^dm)$. The total time complexity thus follows.
\end{proof}

\subsection{The Dynamic Shortest Hyperpath Problem}

Given a hypergraph $H=(V,E,w)$ and a change to hyperedge $e$, let
$|\delta|$ denote the number of affected vertices, $\|\delta\|$ the number of affected hyperedges
plus $|\delta|$,  $|\delta_\Phi|=\sum_{e\in E, e \textrm{ is affected}}|e|^2$, and
$\|\tilde{\delta}\|$ the number of affected edges in the underlying graph plus $|\delta|$.
\begin{theorem} \label{thm:DY}
The time complexities of HE-DSP and DR-DSP for the fully dynamic shortest path problem in a general hyperpath are as follows.
\begin{table}[h!]
\centering
\begin{tabular}{||c|c||}
\hline\hline
Algorithm & Time Complexity \\
\hline
HE-DSP & $O(|\delta|\log |\delta|+|\delta_\Phi|)$\\
\hline
DR-DSP & $O(|\delta|\log|\delta|+\|\tilde{\delta}\|+|e|^2\log m)$\\
\hline\hline
\end{tabular}
\end{table}\
\end{theorem}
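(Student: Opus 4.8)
The plan is to bound the running time of each algorithm by accounting for the cost of its constituent steps, reusing the per-change complexity $O(|\delta|\log|\delta|+\|\delta\|)$ established for the graph case in Section~\ref{sec:BG}, together with the structural facts proved for the hypergraph setting. I would treat weight decrease and weight increase separately but note that the dominant terms coincide, so the stated bound holds for both (and hence for the fully dynamic problem).

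For HE-DSP, first I would observe that Step~1 (scanning $\check{e}$ to find the unaffected vertex and enqueue the affected ones) costs $O(|\check{e}|)$, which is absorbed into $|\delta_\Phi|$ since if $\check{e}$ causes any change it is an affected hyperedge. The heart of the argument is Step~2 (and, for weight increase, the coloring process of Step~2 and the initialization of Step~3.a): each vertex $v$ enters the priority queue $Q$ (or $M$) only if it is affected, giving at most $|\delta|$ enqueue/dequeue operations, each $O(\log|\delta|)$ with a Fibonacci heap — this yields the $|\delta|\log|\delta|$ term. When a vertex $z$ is dequeued, the algorithm scans every hyperedge $e\ni z$ and every $v\in e$; the key point is that such an $e$ must be an affected hyperedge (it contains the affected vertex $z$), so the total scanning work is bounded by $\sum_{e\text{ affected}}|e|^2=|\delta_\Phi|$, exactly as in the static analysis of Theorem~\ref{thm:STC} but restricted to the affected part. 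I would also check that the coloring step for weight increase touches only descendants of $\check{e}$ in the relationship tree, all of which are affected, so it too is bounded by the same quantities. Summing gives $O(|\delta|\log|\delta|+|\delta_\Phi|)$.

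For DR-DSP, the running time splits into the Graph Update step and the subsequent dynamic shortest-path computation in the underlying graph $G$. The Graph Update procedure iterates over all pairs $u,v\in\check{e}$ and performs an Update and a Peek on the priority queue $M_{uv}$; there are $\binom{|\check{e}|}{2}=O(|e|^2)$ such pairs, and each $M_{uv}$ holds at most $m$ hyperedge weights, so each operation costs $O(\log m)$, giving the $|e|^2\log m$ term. After the update we have a batch edge-change problem in $G$, but by Properties~\ref{prt:oneside} and~\ref{prt:clique} all changed edges lie in a single clique and are one-sided, so Step~1 of HE-DSP correctly identifies the unaffected vertex $x$ and the genuinely affected vertices among $\check{e}$ in $O(|e|^2)$ time; from there we run the graph algorithm of Section~\ref{sec:BG}, whose cost is $O(|\delta|\log|\delta|+\|\tilde\delta\|)$ by definition of $\|\tilde\delta\|$ as the affected-vertex-plus-affected-edge count in $G$. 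Adding the two contributions yields $O(|\delta|\log|\delta|+\|\tilde\delta\|+|e|^2\log m)$.

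The main obstacle I anticipate is the careful justification, in the weight-increase case of HE-DSP, that the coloring process using an \emph{arbitrary} relationship tree still only ever enqueues affected vertices and scans only affected hyperedges — i.e., that the relationship-tree descendants of $\check{e}$ are exactly the candidates that can be affected, so no work is wasted on white vertices outside $|\delta|$ and $|\delta_\Phi|$. This is where the correctness results (Theorems~\ref{thm:de} and~\ref{thm:in}, proved in the appendices) and Lemma~\ref{lma:SP} must be invoked to ensure the bookkeeping is tight; once that is in hand, the remaining counting is routine.
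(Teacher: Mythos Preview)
Your proposal is correct and follows essentially the same approach as the paper: bound priority-queue work by $|\delta|\log|\delta|$, bound hyperedge scanning by $|\delta_\Phi|$ (since every scanned hyperedge contains a dequeued, hence affected, vertex), and for DR-DSP separate out the $O(|e|^2\log m)$ Graph Update cost before invoking the graph-level bound $O(|\delta|\log|\delta|+\|\tilde\delta\|)$. The ``obstacle'' you flag for the weight-increase coloring step is exactly what Lemma~\ref{lma:in1} in Appendix~B establishes (only affected vertices are ever enqueued into $M$), so once you cite that the bookkeeping closes just as you anticipate.
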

\begin{proof}
For HE-DSP: Weight Decrease, the dominating part is Step~2. In Step~2, there are total $|\delta|$ iterations.
In each iteration, the algorithm first dequeues one node $z$ from $M$ which takes $O(\log |\delta|)$ time.
Then the algorithm updates all of $z$'s neighbors by scanning all the hyperedges containing $z$. Each affected
hyperedge $e$ can be scanned at most $|e|(|e|-1)=O(|e|^2)$ times. Therefore the time spent on updates for all
iterations is $O(|\delta_\Phi|)$. The total time complexity of HE-DSP: Weight Decrease is $O(|\delta|\log |\delta|+|\delta_\Phi|)$.
For HE-DSP: Weight Increase, similar to the above analysis, the time complexity for the dominating part (Step~2,
Step~3.a and Step~3.b) is $O(|\delta|\log |\delta|+|\delta_\Phi|)$. The total time complexity of HE-DSP: Weight Increase
is $O(|\delta|\log |\delta|+|\delta_\Phi|)$. The result thus follows.

For DR-DSP: Weight Decrease, the total time spent on Graph Update procedure is $O(|e|^2\log m)$.
In Step~2, there are $|\delta|$ iterations; in each iteration $O(\log |\delta|)$
time is spent to dequeue $z$ from $M$. Time spent on updating neighbors over all iterations is $O(\|\tilde{\delta}\|)$.
Therefore the total time complexity is $O(|\delta|\log|\delta|+\|\tilde{\delta}\|+|e|^2\log m)$.
For DR-DSP: Weight Increase,  Step~2, Step~3.a and Step~3.b take $O(|\delta|\log|\delta|+\|\tilde{\delta}\|)$ (similar
to the analysis for graphs). The total time
complexity thus follows.
\end{proof}

From Theorem \ref{thm:DY} we see that if $\delta$ is small and $|e|$ is large, HE-DSP
performs better, since in DR-DSP, the update of the underlying graph has to be done regardless
whether there are affected vertices. Thus in a sequence of hyperedge changes,
if only a small fraction of them actually have affected nodes, then HE-DSP will outperform DR-DSP.
On the other hind, if $\delta$ is large, because usually $|\delta_\Phi|\gg\|\tilde{\delta}\|$, then DR-DSP will outperform HE-DSP.
Consider the extreme example where every valid hyperedge exists, all nodes are affected and the changed hyperedge contains
$n$ vertices. Then $|\delta|=n$, $|\delta_\Phi|=O(n^22^n)$, $\|\tilde{\delta}\|=O(n^2)$, $|e|=n$, $m=O(2^n)$.
The time complexity of HE-DSP is $O(n\log n+n^22^n)=O(n^22^n)$ while the time complexity of DR-DSP is $O(n^3)$. We see
that the time complexity of DR-DSP can be much lower than that of HE-DSP.

\section{Simulation Results}
\label{sec:simu}

We present simulation results on the
running time of the proposed dynamic shortest hyperpath algorithms.
We test the proposed algorithms on hypergraphs generated from a random geometric model as well as those
generated by the Enron email data set. All simulation code is compiled
and run on the same laptop equipped with a 3.0GHz i7-920XM
Mobile Processor.

\subsection{Random Geometric Hypergraph}
We first consider a random geometric hypergraph model in which $n$ nodes are uniformly distributed in an $a\times a$ square.
All nodes within a circle with radius $r$ form a hyperedge (circles are centered on a $h\times h$ grid).
The weight of each hyperedge is given by the average distance
between all pairs of vertices of this hyperedge.

A sequence of changes are then generated and the proposed dynamic algorithms are employed to maintain all the shortest hyperpaths
from the source $s$ located at a corner of the $a\times a$ square.
Each change can be a hyperedge insertion (with probability $p_I$), a hyperedge deletion (with probability $p_D$), or a weight
change (with probability $1-p_I-p_D$) with new weight chosen uniformly in $[w_{min},w_{max}]$. In the case of a hyperedge deletion
or a weight change, the hyperedge to be deleted or to be assigned with a new weight is chosen according to the two models detailed below.
Hyperedge insertions are only \emph{realized} when there are hyperedges that have been deleted, and a randomly chosen one is inserted back.
This ensures that all hyperedges satisfy the geometric property determined by $r$ at all time. It also models the practical scenario
where a broken link is repaired.

In selecting a hyperedge for deletion or weight change, we consider two different models: the random change model and the targeted change model.
In the former, the hyperedge is randomly and uniformly chosen among all hyperedges. In the latter, it is randomly and uniformly chosen
from the current shortest hyperpaths. This models the scenarios where hyperedges in the shortest hyperpaths
are more prone to changes due to attacks, frequent use, or higher priority in maintenance and upgrade.

In Fig.~\ref{fig:D1}, we show the simulation results on the running time of the two proposed algorithms under a sequence
of $10^4$ changes. We see that HE-DSP has lower complexity in networks with random
topological and weight changes (Fig.~\ref{fig:D1}-Left), whereas DR-DSP should be preferred in networks
with targeted changes (Fig.~\ref{fig:D1}-Right).
This partition of the application space can be explained from the structures of these two algorithms. Under the random
change model, a large fraction of changes do not result in changes in the current shortest hyperpaths. Such changes lead to little
computation in maintaining the shortest hyperpaths for both algorithms, but requires about the same amount of computation in the
Graph-Update step of DR-DSP for maintaining the underlying graph. On the other hand, under the targeted change model, all hyperedge
deletions and weight changes affect the shortest hyperpaths. Updating the shortest hyperpaths can be done more efficiently
in DR-DSP since it works on the underlying graph with a much smaller number of edges.

\begin{figure}[h!]
\centering
\scalefig{0.23} \epsfbox{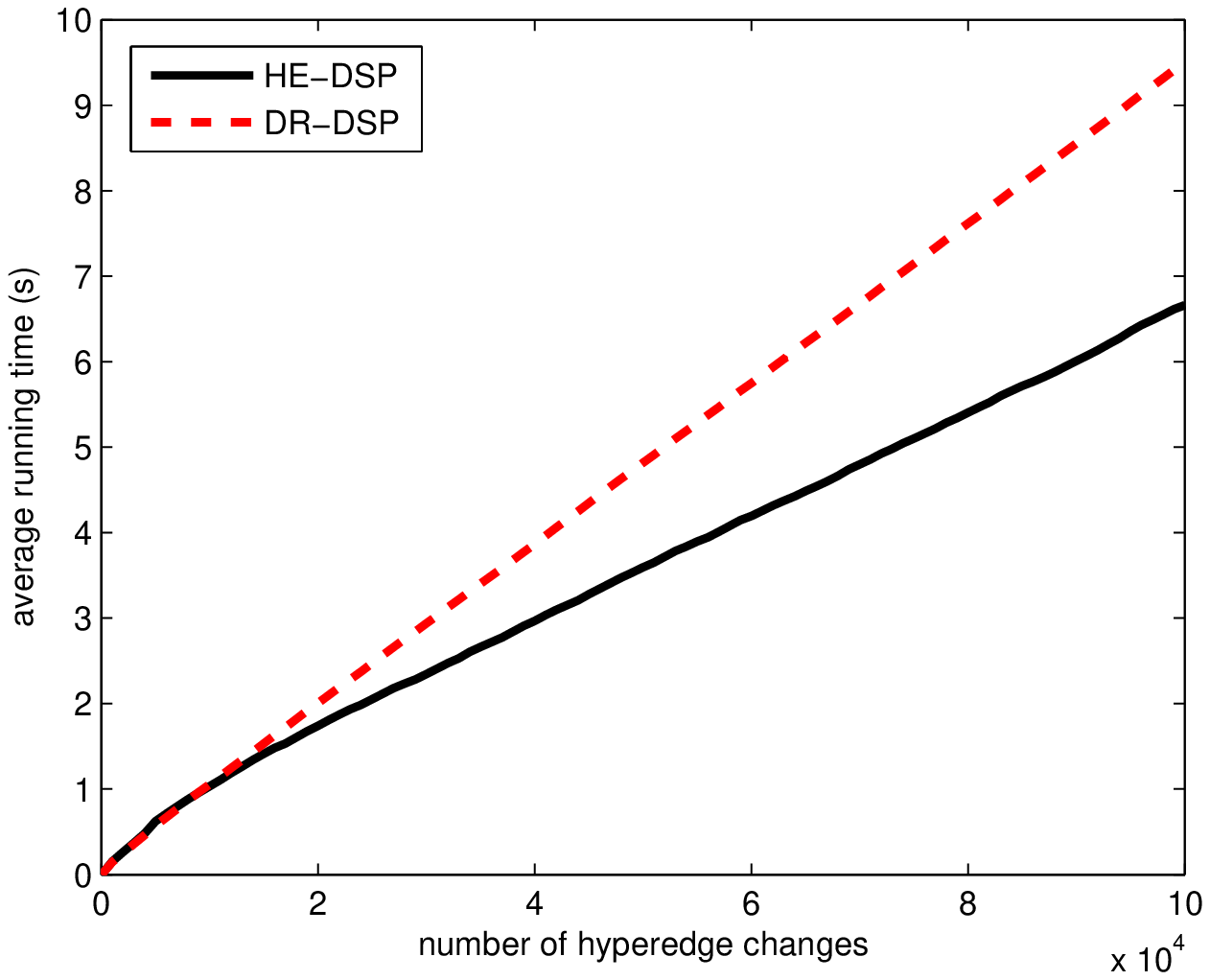}
\scalefig{0.23} \epsfbox{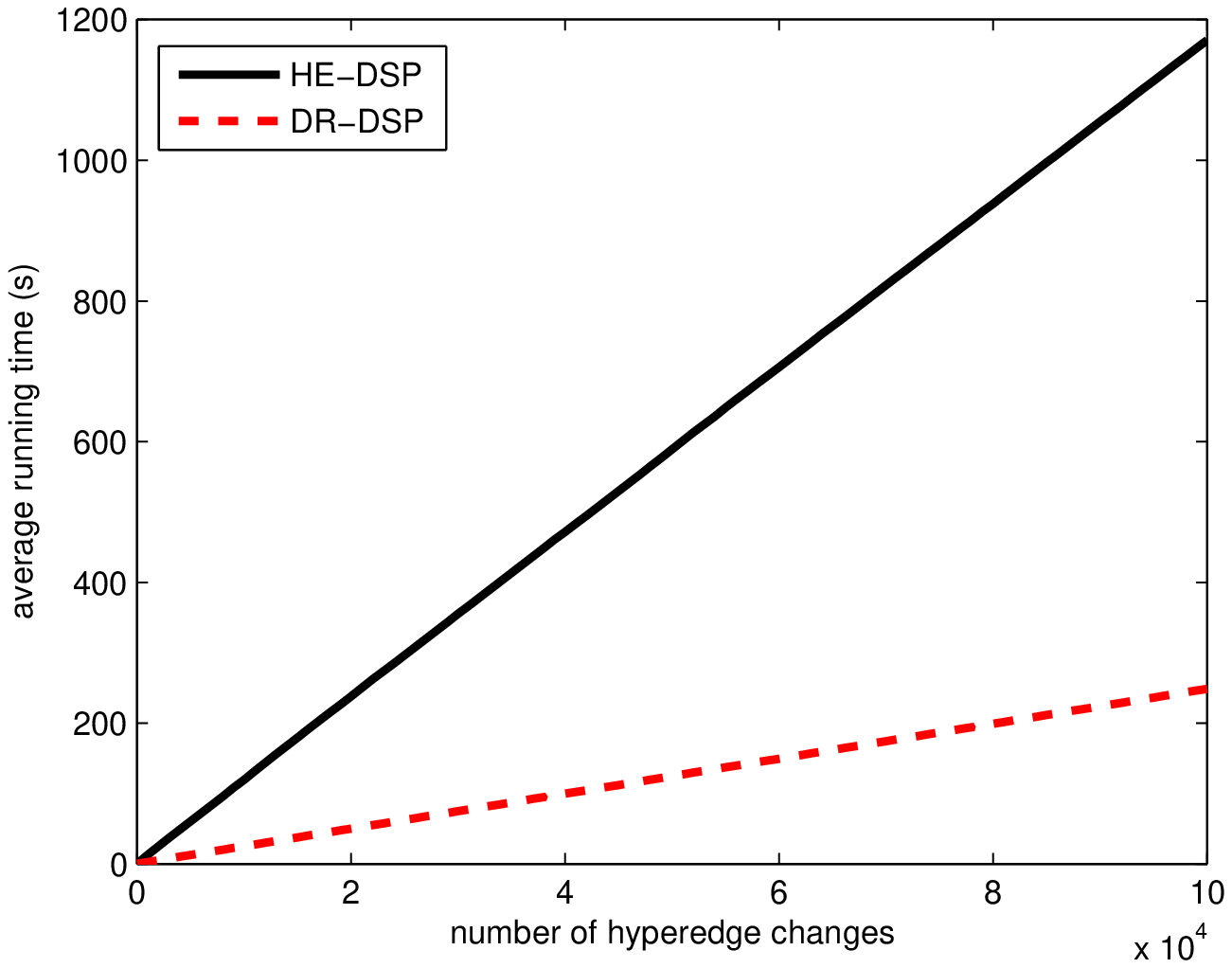}
\caption{The average running time. Left: the random change model; Right: the targeted change model ($n=1000$, $a=1000$, $r=\sqrt{1000}$, $h=1$,
$p_I=\frac{1}{4}$, $p_D=\frac{1}{4}$, $w_{\min}=10$, $w_{\max}=20$,
the average is taken over $50$ random hypergraphs).}
\label{fig:D1}
\end{figure}


\subsection{Enron Email Data Set}

In this example, we consider the application of the shortest hyperpath algorithms in finding the most important actor in
a social network. We consider the Enron email data set and use the same hypergraph generation model
as in~\cite{Y.Park2008}. Specifically, each person is a vertex of the hypergraph, and the sender and recipients of every email
form a hyperedge. Our objective is to identify the most important person measured by the closeness centrality index (\ie the total
weight of the shortest hyperpaths from this person to all the
other persons). The first step is to assign weight to each hyperedge that reflects ``distance''. While there is no universally
accepted way of measuring distance in a social network observed through email exchanges, certain general rules apply. First,
a direct email exchange between two persons indicates a stronger tie than an email sent to a large group. Thus, the weight of
an hyperedge should be an increasing function of the cardinality of this hyperedge. Second, more frequent email exchange
among a given group of people shows stronger ties. Thus, the weight of an hyperedge should be decreasing with the number of times
that this hyperedge appears in the email data set. Considering these two general rules, we adopt the following weight function:
\begin{eqnarray} \label{eqn:enron}
w(e) =(\sqrt{|e|})^{\alpha^{(l-1)}}
\end{eqnarray}
where $|e|$ is the cardinality of the hyperedge $e$,
$\alpha$ is the parameter measuring how fast the weight decreases with the number $l$ of times that this hyperedge appears in the data set.

We can then apply DR-SP on the resulting (static) hypergraph to find the shortest hyperpaths rooted at
each vertex and compute this vertex's closeness centrality index. With the weight function given in~\eqref{eqn:enron} using $\alpha=0.6$,
the identified most important actor is the CEO of Enron. The average distance (along the shortest hyperpath) from
the CEO to the other persons at various positions is shown in~Fig.~\ref{fig:E2}. We observe that in general, the higher the position, the shorter
the distance. These results demonstrate that the adopted hypergraph model and weight function capture the essence of the problem.
\begin{figure}[h!]
\centering
\scalefig{0.4} \epsfbox{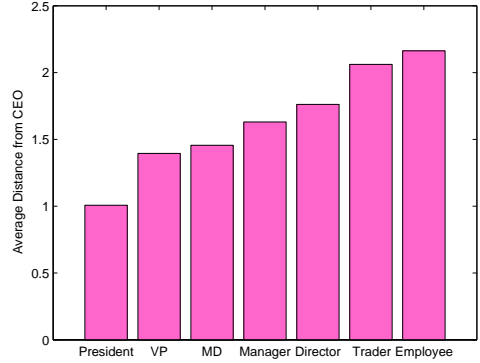}
\caption{The average distance from the CEO to others at different positions.}
\label{fig:E2}
\end{figure}

Next, we construct a dynamic hypergraph sequence based on the Enron data set. At the beginning, the hypergraph contains only
individual vertices. We then consider each email chronologically. Each email either adds a new hyperedge or decrease the weight of
an existing hyperedge (due to the increased number of appearances of this hyperedge). The two proposed algorithms are employed to
maintain the shortest hyperpaths rooted at the CEO after each change. The running time is given in Fig.~\ref{fig:E1}, which shows
the lower complexity of DR-DSP. The reason is that a large fraction of hyperedge changes result in changes in the shortest hyperpaths.

\begin{figure}[h!]
\centering
\scalefig{0.35} \epsfbox{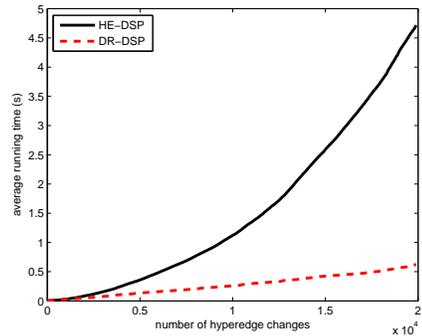}
\caption{The average running time for the Enron data set ($\alpha=0.6$, the average is taken over $50$ monte carlo runs).}
\label{fig:E1}
\end{figure}

\section{Conclusion}
We have presented, to our best knowledge, the first study of the fully dynamic shortest path problem in a general hypergraph.
We have developed two dynamic algorithms for finding and maintaining the shortest hyperpaths. These two algorithms
complement each other with each one preferred
in different types of hypergraphs and network dynamics, as illustrated in the time complexity analysis and simulation experiments.
We have discussed and studied via experiments over a real data set the potential applications of the dynamic shortest hyperpath problem
in  social and communication networks.

\section*{Appendix A: Proof of Theorem~\ref{thm:de}}\label{app:A}
The proof is based on the following three lemmas.
\begin{lemma} \label{lma:de1}
Let $x=\textrm{argmin}_{v\in\check{e}}\{d[v]\}$, then $d[x]=d'[x]$ and $d'[x]=\min_{v\in\check{e}}\{d'[v]\}$.
\end{lemma}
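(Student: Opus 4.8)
The plan is to reduce both statements to the single inequality
\begin{equation}\label{eq:lemde1aux}
d'[v]\ \ge\ d[x]\qquad\text{for every }v\in\check{e}.
\end{equation}
Granting \eqref{eq:lemde1aux}, the first claim follows at once: the shortest hyperpath to $x$ in the original hypergraph is still a hyperpath after the change (being a hyperpath depends only on vertex membership and intersections, not on weights), and decreasing $w(\check{e})$ cannot raise its weight, so $d'[x]\le d[x]$; combined with \eqref{eq:lemde1aux} at $v=x$, this gives $d'[x]=d[x]$. For the second claim, \eqref{eq:lemde1aux} gives $\min_{v\in\check{e}}d'[v]\ge d[x]=d'[x]$, while $x\in\check{e}$ forces $\min_{v\in\check{e}}d'[v]\le d'[x]$; hence $d'[x]=\min_{v\in\check{e}}d'[v]$.

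To prove \eqref{eq:lemde1aux}, I would fix $v\in\check{e}$ and take a shortest hyperpath $L'=\{e'_1,\dots,e'_k\}$ from $s\in e'_1$ to $v\in e'_k$ in the modified hypergraph, so that the weight of $L'$ computed with the new weights equals $d'[v]$. If $\check{e}$ does not occur in $L'$, then $L'$ is a hyperpath of the original hypergraph of the same weight, so $d[v]\le d'[v]$; since $x$ minimizes $d[\cdot]$ over $\check{e}$ and $v\in\check{e}$, we get $d[x]\le d[v]\le d'[v]$. Otherwise, let $j$ be the smallest index with $e'_j=\check{e}$. If $j=1$ then $s\in e'_1=\check{e}$, so $d[x]=\min_{u\in\check{e}}d[u]\le d[s]=0$ and \eqref{eq:lemde1aux} is immediate. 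If $j\ge 2$, pick any $u\in e'_{j-1}\cap e'_j\subseteq\check{e}$ (the intersection is nonempty because $L'$ is a hyperpath); by Lemma~\ref{lma:SP} applied to $L'$, the prefix $L'_u=\{e'_1,\dots,e'_{j-1}\}$ is a shortest hyperpath to $u$ in the modified hypergraph, and by minimality of $j$ none of $e'_1,\dots,e'_{j-1}$ equals $\check{e}$, so $L'_u$ has the same weight in the original hypergraph; hence $L'_u$ witnesses $d[u]\le d'[u]$, and $u\in\check{e}$ gives $d'[u]\ge d[u]\ge d[x]$. Finally, since hyperedge weights are nonnegative, truncating $L'$ to its prefix $L'_u$ cannot increase the weight, so $d'[v]\ge d'[u]\ge d[x]$, which is \eqref{eq:lemde1aux}.

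The main obstacle is that a shortest hyperpath to $v$ after the change may genuinely use $\check{e}$, the one hyperedge whose weight moved, so it cannot be compared directly against distances in the original hypergraph. The way around it — and the heart of the argument — is to split that path at its \emph{first} use of $\check{e}$: the prefix is unaffected by the change and, by Lemma~\ref{lma:SP}, is itself a shortest hyperpath to some vertex of $\check{e}$, which is exactly where the minimality defining $x$ is brought to bear.
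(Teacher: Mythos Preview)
Your proof is correct and follows essentially the same idea as the paper's: both locate a vertex of $\check{e}$ (the parent $y$ in the paper, your intersection vertex $u$) whose new shortest hyperpath avoids $\check{e}$ and hence has unchanged distance, then invoke the minimality of $x$ over $\check{e}$. Your packaging via the single inequality $d'[v]\ge d[x]$ for all $v\in\check{e}$, from which both claims drop out at once, is a bit cleaner than the paper's two separate contradiction arguments, but the underlying mechanism is the same.
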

\begin{proof}
 Proof by contradiction. Assume that $d'[x]<d[x]$, then $x$ has to use $\check{e}$ on its new shortest hyperpath. Since we consider only simple hyperpaths and $x\in\check{e}$, we have $E[x]=\check{e}$. Therefore its parent $y=P[v]$ cannot use $\check{e}$ on its shortest hyperpath, which implies that the shortest distance to $y$ does not change: $d[y]=d'[y]$. Given that $y$ is the parent of $x$ on its new shortest hyperpath, we have $d[y]=d'[y]\leq d'[x]<d[x]$ which contradicts to the definition of $x$.

 For the second statement, assume there exists $z\in\check{e}$ such that $d'[z]<d'[x]$. Based on the definition of $x$ and the hypothetical assumption, $d[z]\geq d[x]=d'[x]>d'[z]$. It thus follows that $z$'s shortest hyperpath changes and $E[z]=\check{e}$ in the new shortest hyperpath.  Follow the same line of arguments by considering the parent of $z$, we arrive at the same contradiction in terms of the definition of $x$.
\end{proof}
\begin{lemma} \label{lma:de2}
For any vertex $v$, $v$ is enqueued into $Q$ if and only if $d'[v]<d[v]$.
\end{lemma}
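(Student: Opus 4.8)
The plan is to let $S := \{v : d'[v] < d[v]\}$ (the set that must coincide with the set of vertices ever placed in $Q$) and to build everything on the invariant that $D[v] \ge d'[v]$ holds throughout the execution. I would prove the invariant by induction over the reassignments the algorithm makes: it holds initially since $D[v]=d[v]\ge d'[v]$ (decreasing a weight cannot raise a shortest distance), and each reassignment has the form $D[v]\leftarrow D[x]+w_{new}$ with $v\in\check e$ (Step 1) or $D[v]\leftarrow D[z]+w(e)$ with $v,z\in e$ (Step 2); appending $\check e$ (resp.\ $e$) to a shortest new hyperpath to $x$ (resp.\ $z$) — or, if that path already contained the hyperedge, truncating at it — gives $d'[v]\le d'[x]+w_{new}$ (resp.\ $d'[v]\le d'[z]+w(e)$), so the updated $D[v]$ is still $\ge d'[v]$ by the induction hypothesis. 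The ``only if'' direction is then immediate: whenever $v$ is enqueued, $D[v]$ has just been strictly decreased (both enqueue tests are strict), and since $D[v]$ never rises above its initial value $d[v]$, we get $d'[v]\le D[v]<d[v]$, i.e.\ $v\in S$.

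For the ``if'' direction I would prove the stronger claim that \emph{every $v\in S$ has $D[v]=d'[v]$ at some moment}; this is enough, since then the first time $D[v]$ hits $d'[v]$ it has strictly decreased from $d[v]$, so $v$ is enqueued (or updated — and hence was enqueued before) at that step. For $v\in\check e\cap S$ this is the base case: Lemma~\ref{lma:de1} gives $d[x]=d'[x]=\min_{u\in\check e}d'[u]$, and pairing $d'[v]\le d'[x]+w_{new}$ (from the invariant argument) with the reverse inequality (a new shortest hyperpath to $v\in S$ must use $\check e$, and its portion before $\check e$ ends at a vertex of $\check e$, of new distance $\ge d'[x]$) yields $d'[v]=d[x]+w_{new}$; hence the Step 1 test $D[x]+w_{new}<D[v]$ holds and sets $D[v]=d'[v]$.

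For $v\in S\setminus\check e$ I would induct on $v$ in the lexicographic order of $\bigl(d'[v],\ell(v)\bigr)$, where $\ell(v)$ is the least number of hyperedges on a new shortest hyperpath to $v$ (well founded, since hyperpath weights lie in a finite set; the second coordinate only breaks ties produced by zero-weight hyperedges). Take a new shortest hyperpath to $v$ of length $\ell(v)$; it must use $\check e$, which cannot be its last hyperedge, so its last hyperedge $f$ has unchanged weight; let $v'$ be the vertex preceding $f$ on it. By minimality $v'\ne v$, by Lemma~\ref{lma:SP} $d'[v']=d'[v]-w(f)$, and $v'\in S$ (else $d[v']=d'[v']$ and prepending an old shortest hyperpath to $v'$ before $f$ gives an old hyperpath to $v$ of weight $d[v']+w(f)=d'[v]<d[v]$, impossible). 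As $\bigl(d'[v'],\ell(v')\bigr)$ strictly precedes $\bigl(d'[v],\ell(v)\bigr)$, induction gives a moment with $D[v']=d'[v']$; there $v'$ is (re-)enqueued with key $d'[v']$, hence later dequeued with that value (by the invariant it cannot be smaller); when $v'$ is dequeued the hyperedge $f$, which contains both $v'$ and $v$, is scanned, so the test $D[v]>d'[v']+w(f)=d'[v]$ either fires — setting $D[v]=d'[v]$ — or does not, in which case the invariant forces $D[v]=d'[v]$ already. Either way $D[v]$ attains $d'[v]$.

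The step I expect to be the real obstacle is the passage from ``$v$ is enqueued'' to this strengthened induction: enqueueing is not a self-propagating property, so one is forced to track simultaneously that the correct new distances are \emph{attained} and then \emph{carried out of} the queue — effectively proving the relevant half of Theorem~\ref{thm:de} en route — and this bookkeeping must be done carefully against the lazy ``Enqueue or Update'' queue and against ties among the $d'$-values.
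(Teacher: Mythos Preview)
Your proposal is correct. Both you and the paper dispatch the ``only if'' direction via the same sandwich $d[v]\ge D[v]\ge d'[v]$ (the paper states this invariant as a parenthetical; you prove it explicitly). For the ``if'' direction, both arguments trace a new shortest hyperpath to $v$ through $\check e$, but the organization differs: the paper walks \emph{forward} from $\check e$ along one fixed path $\{e_1,\ldots,\check e,e_{i+1},\ldots,e_l\}$, asserting that each intersection vertex $u_j$ lies in $S$ and is enqueued when $u_{j-1}$ is dequeued; you walk \emph{backward} from $v$ by strong induction on $(d'[v],\ell(v))$, and you strengthen the target to ``$D[v]$ attains $d'[v]$''. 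Your strengthening is exactly what is needed to justify the step the paper leaves implicit --- namely that when $u_{j-1}$ is dequeued it carries the correct key $d'[u_{j-1}]$, so the test on $u_j$ really fires --- and in doing so you effectively fold the content of Lemma~\ref{lma:de3} into this proof. The paper's version is shorter because that bookkeeping is deferred to the next lemma; yours is self-contained at the cost of proving more than the statement asks.
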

\begin{proof}
Consider first that $v$ is enqueued into $Q$. From the algorithm, this can only happen if there exists a neighbor $z$ and a hyperedge $e\ni v,z$ such that $D[z]+w(e)<D[v]$. We thus have $d[v]\geq D[v]>D[z]+w(e)\geq d'[v]$ (note that at any time, $d[v]\ge D[v] \ge d'[v]$, which can be easily seen from the procedure of the algorithm).

We now prove the converse. Assume that $d'[v]<d[v]$. Let $p=\{e_1,e_2,\ldots,e_i,\check{e},e_{i+1},\ldots,e_l\}$ be $v$'s new shortest hyperpath. There exists $u_{i+1}\in \check{e}\cap e_{i+1}$ such that $d'[u_{i+1}]<d[u_{i+1}]$. In Step~1 of the algorithm, $u_{i+1}$ is enqueued. Similarly, there exists $u_{i+2}\in e_{i+1}\cap e_{i+2}$ with $d'[u_{i+2}]<d[u_{i+2}]$. Then $u_{i+2}$ will be enqueued in Step~2 of the algorithm when $u_{i+1}$ is dequeued if it has not been enqueued before that. Repeating this line of argument, we conclude that there exits $u_l\in e_{l-1} \cap e_l$ with $d'[u_l]<d[u_l]$ and $u_l$ is enqueued into $Q$. Then $v$ will be enqueued when $u_l$ is dequeued if it is not enqueued already.
\end{proof}
\begin{lemma} \label{lma:de3}
For each $v$ dequeued from $Q$, $D[v]=d'[v]$.
\end{lemma}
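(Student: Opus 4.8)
The plan is to run the classical correctness argument for Dijkstra's algorithm on the queue $Q$, using Lemmas~\ref{lma:SP}, \ref{lma:de1}, and~\ref{lma:de2} to supply the structural facts that, in the graph case, were furnished by the tree structure. I will use the invariant $d[v]\ge D[v]\ge d'[v]$ noted in the proof of Lemma~\ref{lma:de2}, so it suffices to show $D[v]\le d'[v]$ at the moment $v$ is dequeued; and I will use that, by Lemma~\ref{lma:de2}, the vertices that ever enter $Q$ are exactly the affected ones, every unaffected vertex keeping $D[\cdot]=d[\cdot]=d'[\cdot]$ throughout. Two consequences of affectedness should be recorded first: (i) every new shortest hyperpath of an affected vertex $v$ traverses $\check{e}$, since otherwise its weight would be unchanged by the modification, forcing $d[v]\le d'[v]<d[v]$; and (ii) by Lemma~\ref{lma:SP} the portion of such a path before it enters $\check{e}$ is itself a ($\check{e}$-free) shortest hyperpath, so the vertex $x'$ at which the path enters $\check{e}$ is unaffected and $D[x']$ already holds its final value.

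Next I would analyze Step~1 to obtain the base of the induction: when Step~2 begins, $Q$ contains exactly the affected vertices of $\check{e}$, each with its correct final key, and all these keys are equal. Fix an affected $u\in\check{e}$ and let $x=\textrm{argmin}_{w\in\check{e}}\{d[w]\}$, which by Lemma~\ref{lma:de1} also minimizes $d'[\cdot]$ over $\check{e}$ and satisfies $d[x]=d'[x]$; at the start $D[x]=d[x]$. Relaxing $\check{e}$ out of $x$ gives $d'[u]\le d'[x]+w_{new}$. Conversely, truncating a new shortest hyperpath of $u$ at $\check{e}$ (using (i) and simplicity) yields $d'[u]=d'[x']+w_{new}$ for the unaffected entry vertex $x'\in\check{e}$, and $d'[x']=d[x']\ge d[x]=d'[x]$, whence $d'[u]=d'[x]+w_{new}=D[x]+w_{new}$. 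Since $u$ is affected, $D[x]+w_{new}=d'[u]<d[u]$, so the Step~1 test fires and sets $D[u]\leftarrow D[x]+w_{new}=d'[u]$, and it fires for no other vertex. This is the step where Lemma~\ref{lma:de1} is essential: it lets the algorithm's single minimizer $x$ stand in for every affected vertex's own entry point into $\check{e}$.

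Finally I would close the argument by induction on the order in which vertices are dequeued, with the hypothesis that each dequeued vertex is correct at dequeue time and that the successive dequeued keys are nondecreasing (the latter being standard, since a dequeued vertex relaxes only to keys no smaller than its own because all weights are nonnegative, so the queue minimum never drops). For the inductive step on a vertex $v$: if $v\in\check{e}$, then Step~1 already set $D[v]=d'[v]$, and later relaxations can only lower $D[v]$, which is already at the floor $d'[v]$, so $v$ is dequeued with $D[v]=d'[v]$. If $v\notin\check{e}$, pick a new shortest hyperpath $\{e_1,\dots,e_l\}$ of $v$ and a transition vertex $w\in e_{l-1}\cap e_l$, so that $d'[v]=d'[w]+w(e_l)$ with $e_l\neq\check{e}$; then $w$ must be affected, for otherwise appending $e_l$ to an old shortest hyperpath of $w$ would give $d[v]\le d[w]+w(e_l)=d'[v]<d[v]$. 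Hence $w\in Q$ at some point; since $d'[w]\le d'[v]<D[v]$ and the dequeued keys are nondecreasing, $w$ was dequeued before $v$ with $D[w]=d'[w]$ by the induction hypothesis, and at that moment relaxing $e_l$ set $D[v]\le d'[w]+w(e_l)=d'[v]$. In either case $D[v]=d'[v]$, completing the induction.

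The one point that needs care — and the likely main obstacle — is the bookkeeping around Step~1: it keys every affected vertex of $\check{e}$ off the global minimizer $x$ rather than off that vertex's own predecessor, and reconciling this requires the identity $d'[u]=D[x]+w_{new}$ for all affected $u\in\check{e}$, which is exactly where Lemma~\ref{lma:de1} enters. The complementary subtlety, that a predecessor on an optimal hyperpath can fail to be an affected (hence queued) vertex only when the final hyperedge is $\check{e}$ itself, is dispatched by treating the case $v\in\check{e}$ separately as above.
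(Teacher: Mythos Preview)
Your proposal is correct and follows essentially the same route as the paper's proof: first establish that dequeued keys are nondecreasing, then induct on dequeue order, with the base case covering the affected vertices of $\check{e}$ (via Lemma~\ref{lma:de1}) and the inductive step picking the predecessor $w$ on a new shortest hyperpath of $v$, arguing it is affected, hence dequeued earlier, and then invoking the hypothesis. Your write-up is in fact more explicit than the paper's in two places: you spell out why Step~1 assigns the exact value $d'[u]$ to every affected $u\in\check{e}$ (the paper states this more tersely), and you record observations (i)--(ii) about how affected shortest hyperpaths traverse $\check{e}$, which the paper leaves implicit.
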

\begin{proof}
We first show that if $u$ is dequeued before $v$, then $D[u]\le D[v]$ at the instants when they are dequeued. We prove this by induction. The initial condition holds trivially. Then assume it is true for the first $l$ dequeued vertices  $z_1,\ldots,z_l$. Consider the $(l+1)$th dequeued vertex $z_{l+1}$. At the instant when $z_l$ is dequeued, if $D[z_{l+1}]$ is updated based on $D[z_l]$ in Step~2, then $D[z_l] < D[z_{l+1}]$ even after the update. If, on the other hand,  $D[z_{l+1}]$ is not updated at this instant, then
$D[z_l] \le D[z_{l+1}]$ given that the dequeued vertex has the smallest distance.

Next, we prove the lemma by induction. From Step~1 of the algorithm, all the affected vertices $v$ in $\check{e}$ will be dequeued first with $E[v]=\check{e}$, $P[v]=x$, and $D[v]=d'[x]+w(\check{e})$. Based on Lemma~\ref{lma:de1}, $D[v]\le d'[u] +  w(\check{e})$ for any $u\in\check{e}$. It thus follows that the hyperpath to $v$ through $x$ and $\check{e}$ is the shortest one with $D[v]=d'[v]$.

Assume for $z_1,\ldots,z_l$, $D[z_i]=d'[z_i]$ are satisfied for all $i=1,\ldots,l$. Consider the $(l+1)$th dequeued vertex $z_{l+1}\not\in \check{e}$. Let $u=P[z_{l+1}]$ be its parent in the new shortest hyperpath. Then based on the fact that distances of the dequeued vertices are monotonically increasing with the order of the dequeueing as shown at the beginning of the proof, $u$ cannot be any vertex dequeued after $z_{l+1}$. Since $z_{l+1}\not\in\check{e}$, it is also clear that $u$ cannot be an unaffected vertex(otherwise, $z_{l+1}$ will be unaffected, which contradicts Lemma~\ref{lma:de2}). We thus have $u\in\{z_1,\ldots,z_l\}$. Let $u=z_i$. Then when $z_i$ is dequeued , $D[z_{l+1}]$ will be updated to the shortest distance $d'[z_{l+1}]$ due to the induction hypothesis of $D[z_i]=d'[z_i]$. This completes the proof.
\end{proof}
Based on Lemma~\ref{lma:de2} and \ref{lma:de3}, the shortest distances of all affected vertices will be updated correctly. Based on Lemma~\ref{lma:de2}, all unaffected vertices will not be enqueued, and their distances remain the same. It is not difficulty to see from the algorithm that $P[v]$ and $E[v]$ are also correctly maintained for all $v$.
\section*{Appendix B: Proof of Theorem~\ref{thm:in}}\label{app:B}
We first show the correctness of the coloring process as given in the following lemma.
\begin{lemma}\label{lma:in1}
The coloring process correctly
colors all the affected vertices.
\end{lemma}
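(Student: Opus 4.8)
The plan is to follow the structure of Frigioni's correctness proof for the graph case~\cite{frigioni2000fully}, but with the old shortest‑path tree replaced by the chosen relationship tree and with hyperedge intersections handled through Lemma~\ref{lma:SP}. Throughout I will exploit that a weight increase is \emph{strict} ($w_{new}>w(\check{e})$), so that $d[v]\le d'[v]$ for every $v$, and every old hyperpath not using $\check{e}$ keeps its old weight.

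First I would establish a \emph{localization} step: a vertex $v$ is non‑white only if $\check{e}$ lies on its old shortest hyperpath, and in that case, by simplicity of shortest hyperpaths, $\check{e}=E[v_0]$ for a unique ancestor $v_0$ of $v$ in the relationship tree with $v_0\in\check{e}$ --- i.e.\ $v_0$ is exactly one of the vertices enqueued into $M$ in Step~1. Hence every non‑white vertex is a relationship‑tree descendant of a Step~1 vertex. I would then sharpen this to the invariant that actually drives the procedure: $v$ is non‑white if and only if $v$ is a Step~1 vertex or its parent $P[v]$ is non‑white; equivalently, the set of vertices ever inserted into $M$ is precisely the set of non‑white vertices. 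The forward direction uses that if $P[v]$ is white and $E[v]\neq\check{e}$, then concatenating $P[v]$'s still‑optimal old hyperpath with $E[v]$ gives a hyperpath to $v$ of weight $d[v]$, forcing $v$ white; the backward direction uses that a child of a pink vertex (with $E[\cdot]\neq\check{e}$) is white, while every child of a red vertex is inserted by the algorithm and, keeping its old $P[\cdot],E[\cdot]$, cannot be white.

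The core is an induction over the order in which vertices leave $M$. Since a child $c$ of $z$ in the relationship tree has $d[c]=d[z]+w(E[c])\ge d[z]$ by Lemma~\ref{lma:SP}, the queue $M$ dequeues vertices in non‑decreasing order of old distance $d[\cdot]$, so when $z$ is dequeued every non‑white vertex of strictly smaller old distance has already been correctly colored (and white vertices are always available as \emph{nonred}, with $D[\cdot]=d[\cdot]=d'[\cdot]$). I then show the Step~2 test colors $z$ correctly: (i) if it succeeds via a nonred $q$ and $e\ni q,z$ with $D[q]+w(e)=d[z]$, then appending $e$ to a shortest hyperpath of $q$ gives $d'[z]\le d[z]$, hence $d'[z]=d[z]$ and $z$ is genuinely pink (it is not white, having been inserted into $M$); (ii) if $d'[z]=d[z]$, I must produce such a witness. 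Take a new shortest hyperpath $L'$ to $z$; its last hyperedge $e$ cannot be $\check{e}$, for otherwise replacing $w_{new}$ by the smaller old weight would yield an \emph{old} hyperpath to $z$ of weight strictly below $d[z]$. So $e\neq\check{e}$, and its penultimate vertex $q$ satisfies $d[q]+w(e)\ge d[z]$ (optimality of $d[z]$) while $d'[q]+w(e)=d[z]$ and $d[q]\le d'[q]$; hence $d[q]=d'[q]$, so $q$ is nonred, has already been processed, and $(q,e)$ is exactly the witness the test needs.

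The step I expect to be the main obstacle is part (ii): a priori the penultimate vertices of \emph{all} new shortest hyperpaths to $z$ could be red, and one must rule this out and exhibit a nonred predecessor realizing $d[z]$. The resolution above --- excluding $\check{e}$ as the last hyperedge by strictness of the increase, and then getting non‑redness of the predecessor for free from the old‑versus‑new hyperpath comparison --- is the delicate point, and it is what forces careful bookkeeping of which vertices have been assigned a color by the time $z$ is handled, for which the monotone dequeue order is essential. A minor technical nuisance is zero‑weight hyperedges, which create ties in $d[\cdot]$; these are absorbed by a fixed tie‑breaking rule in $M$ (or simply excluded, as in the weight models used in Section~\ref{sec:simu}).
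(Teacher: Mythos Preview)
Your approach is essentially the paper's: show that $M$ dequeues in nondecreasing order of $d[\cdot]$, then induct on this order to verify that the pink/red test is correct at each step, with your handling of direction~(ii) actually more explicit than the paper's sketch. One small slip to fix: your stated invariant ``$v$ is non-white iff $v$ is a Step~1 vertex or $P[v]$ is non-white'' is false as written (a non--Step~1 child of a \emph{pink} vertex is white), and should read ``$P[v]$ is red''---your subsequent argument in fact proves this correct version and matches the claim $M=\{\text{non-white vertices}\}$, so only the statement needs adjusting.
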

\begin{proof}
We first state the following simple facts without proof: given a relationship tree, after the hyperedge weight increase, (1) if $v$ is pink or white, then all its descendent in this relationship tree are white; (2) if a $v$ is red, then all its children in the relationship tree are either pink or red; (3) if a $v$ is affected, either $v\in \check{e}$ or $P[v]$ is red. These facts can be directly obtained from the definition of the color. It is also easy to see that vertices are dequeued from $M$ in a nondecreasing order of their current distance $D[\cdot]$. This is because each time a vertex $z$ is dequeued from $M$, the possible new vertices to be enqueued into $M$ are $z$'s children with distances no smaller than $D[z]$.

Then, the proof of the lemma has two parts: first we prove that all affected vertices are enqueued into $M$; then we prove by induction that only affected vertices are enqueued into $M$ and their colors are correctly identified.

We prove the first part by contradiction. Assume that there exists an affected vertex $v$ that is not enqueued into $M$. It is easy to see that $v\not\in\check{e}$ because all the affected vertices in $\check{e}$ are enqueued in Step~1. Based on the third fact stated above, $P[v]$ is red. Based on the hypothesis, $P[v]$ is not enqueued (otherwise, $v$ will be enqueued in Step~2). Continue this line of arguments, we eventually reach the root of the relationship tree and arrive at the contradiction that the source $s$ is red.

We prove the second part by induction. It is easy to see that all the vertices initially enqueued into $M$ are affected vertices. It remains to show that the first vertex $z_1$ dequeued from $M$ is colored (pink or red) correctly. To show that, we need to establish that the algorithm correctly determines whether there is an alternative shortest hyperpath to $z_1$ with the same distance, \ie $d[z_1]=d'[z_1]$. The key here is to show that checking the currently non-red neighbors (which may become red in the future) of $z_1$ will not lead to a false alternative path. This follows from the fact that $z_1$ has the smallest distance $D[\cdot]$ among all affected vertices (which belong to the set of vertices consisting of the affected vertices in  $\check{e}$ and their descendents).

Next, assume that vertices $z_1, z_2,\ldots,z_l$ dequeued from $M$ are all affected vertices and are correctly colored. Consider the next dequeued vertex $z_{l+1}$. It is an affected vertex because it is either enqueued in Step~1 with $E[v]=\check{e}$ or enqueued in Step~2 with a red parent. To show that $z_{l+1}$ will be colored correctly, we use a similar argument by showing that the currently non-red neighbors of $z_{l+1}$ will not give a false alternative path. The latter follows from the fact that all affected vertices will be enqueued and those dequeued after $z_{l+1}$ have distances no smaller than $D[z_{l+1}]$. This completes the induction.
\end{proof}

We now show that $D[v],~P[v]$ and $E[v]$ are correctly maintained for all $v$. For each red vertex $v$, its distance is set based on the current shortest distance from a non-red neighbor in Step~3.a. The rest of the algorithm is essentially Gallo's extension of Dijkastra's algorithm with the current initial distance. The correctness of the algorithm thus follows. It is not difficult to see that $P[\cdot]$ and $E[\cdot]$ are correctly updated for both red and pink vertices.

\bibliographystyle{ieeetr}
{\footnotesize

}

\end{document}